\newtheorem*{theorem*}{Theorem}
\newtheorem*{definition*}{Definition}
\newcommand{\hide}[1]{}
\newcommand{\xhdr}[1]{\vspace{0.75mm}\noindent{{\bf #1.}}}
\newcommand{\triangleenum}{\textnormal{TriEnum}}
\newcommand{\pluseq}{\mathrel{+}=}
\newcommand{\minuseq}{\mathrel{-}=}
\newcommand{\counts}{\textnormal{counts}}
\newcommand{\prefix}{\textnormal{prefix}}
\newcommand{\suffix}{\textnormal{suffix}}
\newcommand{\concat}{\textnormal{concat}}
\newcommand{\keys}{\textnormal{keys}}
\newcommand{\reverse}{\textnormal{reverse}}
\newcommand{\tstart}{\textnormal{start}}
\newcommand{\tend}{\textnormal{end}}
\newcommand{\dataset}[1]{\textsc{#1}}
\newcommand{\emaileu}{\dataset{Email-Eu}}
\newcommand{\phone}{\dataset{Phonecall-Eu}}
\newcommand{\sms}{\dataset{SMS-A}}
\newcommand{\messages}{\dataset{CollegeMsg}}
\newcommand{\stackoverflow}{\dataset{StackOverflow}}
\newcommand{\fbwall}{\dataset{FBWall}}
\newcommand{\bitcoin}{\dataset{Bitcoin}}
\newcommand{\wikitalk}{\dataset{WikiTalk}}
\newcommand{\phonecallme}{\dataset{Phonecall-ME}}
\newcommand{\smsme}{\dataset{SMS-ME}}
\newcommand{\enumtopsep}{2pt}
\def\thm@space@setup{%
  \thm@preskip=1pt \thm@postskip=1pt
}
\newcommand{\prenodecount}{\textnormal{pre\_nodes}}
\newcommand{\postnodecount}{\textnormal{post\_nodes}}
\newcommand{\middlesum}{\textnormal{mid\_sum}}
\newcommand{\presum}{\textnormal{pre\_sum}}
\newcommand{\postsum}{\textnormal{post\_sum}}
\newcommand{\nodecount}{\textnormal{node\_count}}
\newcommand{\sumtxt}{\textnormal{sum}}
\newcommand{\nbr}{\textnormal{nbr}}
\newcommand{\dir}{\textnormal{dir}}
\newcommand{\globalpre}{\textnormal{count\_pre}}
\newcommand{\globalpost}{\textnormal{count\_post}}
\newcommand{\globalmid}{\textnormal{count\_mid}}
\newcommand{\globalall}{\textnormal{count}}
\newcommand{\uorv}{\textnormal{uorv}}
\newcommand{\muorv}{\textnormal{1-uorv}}
\newcommand{\utov}{\textnormal{utov}}
\begin{document}
\sloppy

\CopyrightYear{2017} \setcopyright{acmcopyright}
\conferenceinfo{WSDM '17,}{February 6--10, 2017, Cambridge, United Kingdom.}
\isbn{978-1-4503-4675-7/17/02}\acmPrice{\$15.00}
\doi{http://dx.doi.org/10.1145/3018661.3018731}

\captionsetup{labelfont=bf,font=bf,skip=1pt}

\clubpenalty=10000
\widowpenalty = 10000

\title{Motifs in Temporal Networks}
%
%
%
%
%

\numberofauthors{3}
\author{
%
%
\alignauthor
Ashwin Paranjape\titlenote{These authors contributed equally to this work.} \\
       \affaddr{Stanford University}\\
       \email{ashwinpp@stanford.edu}
\alignauthor
Austin R.~Benson\raisebox{9pt}{$\ast$} \\
       \affaddr{Stanford University}\\
       \email{arbenson@stanford.edu}
\alignauthor
Jure Leskovec \\
       \affaddr{Stanford University}\\
       \email{jure@cs.stanford.edu}
}

\maketitle

%
%



\begin{abstract}

Networks are a fundamental tool for modeling complex systems in a variety of
domains including social and communication networks as well as biology and
neuroscience.  Small subgraph patterns in networks, called network
motifs, are crucial to understanding the structure and function of these
systems.  However, the role of network motifs in temporal networks, which
contain many timestamped links between the nodes, is not yet well understood.

Here we develop a notion of a temporal network motif as an elementary unit of
temporal networks and provide a general methodology for counting such motifs.
We define temporal network motifs as induced subgraphs on sequences of temporal edges,
design fast algorithms for counting temporal motifs, and prove
their runtime complexity.  Our fast algorithms achieve up to
56.5x speedup compared to a baseline method.
Furthermore, we use our algorithms to count temporal motifs in a variety of
networks.  Results show that networks from different domains have
significantly different motif counts, whereas networks from the same domain
tend to have similar motif counts.  We also find that different motifs occur 
at different time scales, which provides further insights into structure and function of temporal networks. 

\end{abstract}

\section{Introduction}
\label{sec:introduction}

Networks provide an abstraction for studying complex systems in a broad set of
disciplines, ranging from social and communication networks to molecular biology
and neuroscience~\cite{newman2003structure}.  Typically, these systems are
modeled as static graphs that describe relationships between objects (nodes) and
links between the objects (edges). However, many systems are not static as the
links between objects dynamically change over time~\cite{holme2012temporal}.
Such {\em temporal networks} can be represented by a series of timestamped
edges, or \emph{temporal edges}.  For example, a network of email or instant
message communication can be represented as a sequence of timestamped directed
edges, one for every message that is sent from one person to another.  Similar
representations can be used to model computer networks, phone calls, financial
transactions, and biological signaling networks.

\begin{figure}[!t]
\centering
\begin{subfigure}[b]{0.3\textwidth}\phantomcaption\label{fig:intro1A}\end{subfigure}
\begin{subfigure}[b]{0.3\textwidth}\phantomcaption\label{fig:intro1B}\end{subfigure}
\begin{subfigure}[b]{0.3\textwidth}\phantomcaption\label{fig:intro1C}\end{subfigure}
\includegraphics[width=\columnwidth]{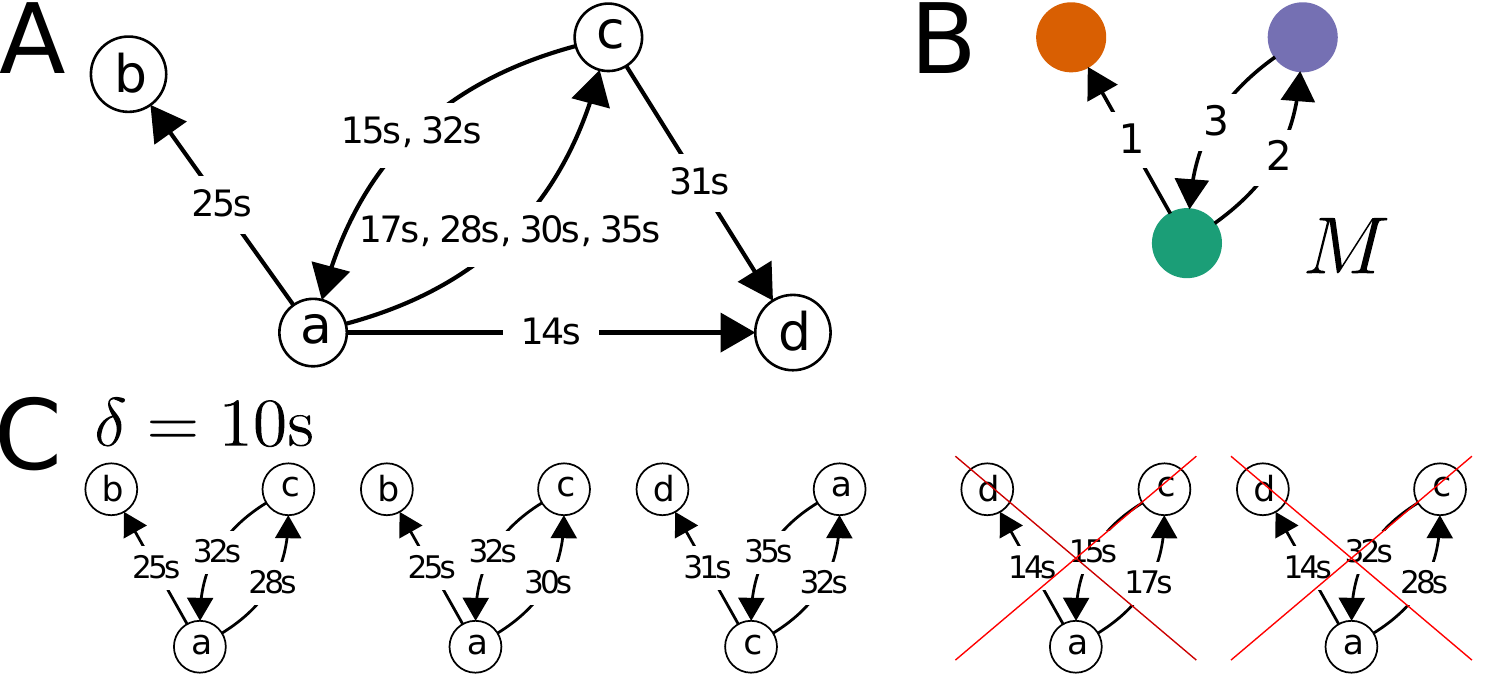}
\caption{%
Temporal graphs and $\delta$-temporal motifs.
\textnormal{A}: A temporal graph with nine temporal edges.  Each edge
has a timestamp (listed here in seconds).
\textnormal{B}: Example $3$-node, $3$-edge $\delta$-temporal motif $M$.  The
edge labels correspond to the ordering of the edges.
\textnormal{C}:
Instances of the $\delta$-temporal motif $M$ in the graph for $\delta$ = 10
seconds.  The crossed-out patterns are not instances of $M$ because either the
edge sequence is out of order or the edges do not all occur within the time
window $\delta$.
\vspace{-0.7cm}
}
\label{fig:intro1}
\end{figure}

While such temporal networks are ubiquitous, there are few tools for modeling
and characterizing the underlying structure of such dynamical systems.  Existing
methods either model the networks as strictly growing where a pair of nodes
connect once and stay connected
forever~\cite{barabasi1999emergence,jacobs2015assembling,leskovec2007graph} or
aggregate temporal information into a sequence of
snapshots~\cite{araujo2014com2,dunlavy2011temporal,tantipathananandh2007framework}.
These techniques fail to fully capture the richness of the temporal information
in the data.

Characterizing temporal networks also brings a number of interesting challenges that
distinguish it from the analysis of static networks. For example, while the
number of nodes and pairs of connected nodes can be of manageable size, the
number of temporal edges may be very large and thus efficient algorithms are
needed when analyzing such data. Another interesting challenge is that patterns
in temporal networks can occur at different time scales.  For example, in
telephone call networks, reciprocation (that is, a person returning a call) can
occur on very short time intervals, while more intricate patterns (e.g., person
$A$ calling person $B$, who then calls $C$) may occur at larger time scales.
Lastly, there are many possible temporal
patterns as the order as well as the sequence of edges play an important role.

\xhdr{Present work: Temporal network motifs}
Here, we provide a general methodology for analyzing temporal networks. We define
temporal networks as a set of nodes and a collection of directed temporal edges, where
each edge has a timestamp.  For example,
Fig.~\ref{fig:intro1A} illustrates a small temporal network with nine temporal
edges between five ordered pairs of nodes.

Our analytical approach is based on generalizing the notion of network motifs to
temporal networks. In static networks, network motifs or graphlets are defined
as small induced subgraphs occurring in a bigger network
structure~\cite{benson2016higher,milo2002network,yaverouglu2014revealing}. We
extend static motifs to temporal networks and define $\delta$-temporal motifs,
where all the edges in a given motif $M$ have to occur inside the time period of
$\delta$ time units. These $\delta$-temporal motifs simultaneously account for
ordering of edges and a temporal window in which edges can occur.  For example,
Fig.~\ref{fig:intro1B} shows a motif on three nodes and three edges, where the
edge label denotes the order in which the edges appear.  While we focus on
directed edges with a single timestamp in this work, our methodology seamlessly
generalizes to common variations on this model.  For example, our methods can
incorporate timestamps with durations (common in telephone call networks),
colored edges that identify different types of connections, and temporal
networks with undirected edges.

We then consider the problem of counting how many times does each
$\delta$-temporal motif occur in a given temporal network.  We develop a general
algorithm for counting temporal network motifs defined by any number of nodes
and edges that avoids enumeration over subsets of temporal edges and whose
complexity depends on the structure of the static graph induced by the temporal
motif.  For motifs defined by a constant number of temporal edges between $2$
nodes, this general algorithm is optimal up to constant factors---it runs in
$O(m)$ time, where $m$ is the number of temporal edges.

Furthermore, we design fast variations of the algorithm that allow for counting
certain classes of $\delta$-temporal motifs including star and triangle
patterns.  These algorithms are based on a common framework for managing summary
counts in specified time windows.  For star motifs with $3$ nodes and $3$
temporal edges, we again achieve a running time linear in the input, i.e.,
$O(m)$ time. Given a temporal graph with $\tau$ induced triangles in its induced static graph,
our fast algorithm counts temporal triangle motifs with $3$ temporal
edges in $O(\tau^{1/2}m)$  worst-case time.  In contrast, any algorithm that
processes triangles individually takes $O(\tau m)$ worst-case time.  In
practice, our fast temporal triangle counting algorithm is up to 56 times faster
than a competitive baseline and runs in just a couple of hours on a network with
over two billion temporal edges.

Our algorithmic framework enables us to study the structure of several complex
systems.  For example, we explore the differences in human communication
patterns by analyzing motif frequencies in text message, Facebook wall post,
email and private online message network datasets.  Temporal
network motif counts reveal that text messaging and Facebook wall posting are
dominated by ``blocking'' communication, where a user only engages with one other
user at a time, whereas email is mostly characterized by ``non-blocking''
communication as individuals send out several emails in a row.  Furthermore,
private online messaging contains a mixture of blocking and non-blocking
behavior.

Temporal network motifs can also be used to measure the frequency of patterns at
different time scales.  For example, the difference in $\delta$-temporal motif
counts for $\delta = 60$ minutes and $\delta = 30$ minutes counts only the
motifs that take at least 30 minutes and at most 60 minutes to form.  With this
type of analysis, we find that certain question-and-answer patterns on Stack
Overflow need at least 30 minutes to develop.  We also see that in online
private messaging, star patterns constructed by outgoing messages sent by one
user tend to increase in frequency from time scales of 1 to 20 minutes before
peaking and then declining in frequency.

All in all, our work defines a flexible notion of motifs in temporal networks
and provides efficient algorithms for counting them. It enables new analyses in
a variety of scientific domains and paves a new way for modeling dynamic complex
systems.

\setlength{\textfloatsep}{3pt}

\section{Related work}
\label{sec:related}

Our work builds upon the rich literature on network motifs in static graphs,
where these models have proved crucial to understanding the mechanisms driving
complex systems~\cite{milo2002network} and to characterizing classes of static
networks~\cite{vazquez2004topological,yaverouglu2014revealing}.  Furthermore,
motifs are critical for understanding the higher-order organizational patterns
in networks~\cite{benson2015tensor,benson2016higher}.  On the algorithmic side,
a large amount of research has been devoted simply to counting triangles in
undirected static graphs~\cite{latapy2008main}.

Prior definitions of temporal network motifs either do not account for edge
ordering~\cite{zhao2010communication}, only have heuristic counting
algorithms~\cite{gurukar2015commit}, or assume temporal edges in a motif must be
consecutive events for a node~\cite{kovanen2011temporal}.  In the last case, the
restrictive definition permits fast counting algorithms but misses important
structures.  For example, many related edges occurring in a short burst at a node
would not be counted together.  In contrast, $\delta$-temporal motifs
capture \emph{every} occasion that edges form a particular pattern within the
prescribed time window.

There are several studies on pattern formation in growing networks where one
only considers the addition of edges to a static graph over time.  In this
context, motif-like patterns have been used to create evolution rules that
govern the ways that networks
develop~\cite{berlingerio2009mining,ugander2013subgraph}.  The way we consider
ordering of temporal edges in our definition of $\delta$-temporal motifs is
similar in spirit.  There are also several analyses on the formation of
triangles in a variety of social
networks~\cite{huang2014mining,kossinets2006empirical,leskovec2010signed}.  In
contrast, in the temporal graphs we study here, three nodes may form a triangle
several times.

\section{Preliminaries}
\label{sec:preliminaries}

We now provide formal definitions of temporal graphs and $\delta$-temporal
motifs.  In Section~\ref{sec:algorithms}, we provide algorithms for counting the
number of $\delta$-temporal motifs in a given temporal graph.

\xhdr{Temporal edges and graphs}
We define a \emph{temporal edge} to be a timestamped directed edge between an
ordered pair of nodes.  We call a collection of temporal edges a \emph{temporal graph}
(Fig.~\ref{fig:intro1A}).  Formally, a temporal graph $T$ on a node set
$V$ is a collection of tuples $(u_i, v_i, t_i)$, $i = 1, \ldots, m$, where each
$u_i$ and $v_i$ are elements of $V$ and each $t_i$ is a timestamp in
$\mathbb{R}$.  We refer to a specific $(u_i, v_i, t_i)$ tuple as
a \emph{temporal edge}. There can be many temporal edges directed from $u$ to
$v$, and we refer to them as \emph{edges between $u$ and $v$}.  We assume that
the timestamps $t_i$ are unique so that the tuples may be strictly ordered. This
assumption makes the presentation of the definitions and algorithms clearer, but
our methods can easily be adapted to the case when timestamps are not unique.
When it is clear from context, we refer to a temporal edge as simply
an \emph{edge}. Finally, by ignoring timestamps and duplicate
edges, the temporal graph induces a standard directed graph, which we call
the \emph{static graph} $G$ of $T$ with \emph{static edges}, i.e.,
$(u, v)$ is an edge in $G$ if and only if there is some temporal edge $(u, v, t)$ in $T$.

\xhdr{$\boldsymbol{\delta}$-temporal motifs and motif instances}
We formalize $\delta$-temporal motifs with the following definition.
\begin{definition*}
A \emph{$k$-node, $l$-edge, $\delta$-temporal motif} is a sequence of $l$ edges,
$M = (u_1, v_1, t_1), (u_2, v_2, t_2) \ldots, (u_l, v_l, t_l)$ that are time-ordered
within a $\delta$ duration, i.e.,
$t_1 < t_2 \ldots < t_l$ and $t_l - t_1 \le \delta$,
such that the induced static graph from the edges is connected and has $k$ nodes.
\end{definition*}
Note that with this definition, many edges between the
same pair of nodes may occur in the motif $M$.  Also, we note that
the purpose of the timestamps is to induce an ordering on the edges.
Fig.~\ref{fig:intro1B} illustrates a particular $3$-node, $3$-edge
$\delta$-temporal motif.

The above definition provides a template for a particular pattern, and we are
interested in how many times a given pattern occurs in a dataset.  Intuitively,
a collection of edges in a given temporal graph is an instance of a
$\delta$-temporal motif $M$ if it matches the same edge pattern and all of the
edges occur in the right order within the $\delta$ time window
(Fig.~\ref{fig:intro1C}).  Formally, we say that any time-ordered sequence
$S = (w_1, x_1, t'_1), \ldots, (w_l, x_l, t'_l)$ of $l$ unique edges is
an \emph{instance} of the motif $M = (u_1, v_1, t_1), \ldots, (u_l, v_l, t_l)$
if
\begin{enumerate}[noitemsep, topsep=\enumtopsep]
\item There exists a bijection $f$ on the vertices such that $f(w_i) = u_i$ and $f(x_i) = v_i$, $i = 1, \ldots, l$, and
\item the edges all occur within $\delta$ time, i.e., $t'_l - t'_1 \le \delta$
\end{enumerate}

A central goal of this work is to count the number of ordered subsets of edges
from a temporal graph $T$ that are instances of a particular motif.  In other
words, given a $k$-node, $l$-edge $\delta$-temporal motif, we seek to find how
many of the $l! {m \choose l}$ ordered length-$l$ sequences of edges in the
temporal graph $T$ are instances of the motif.  A naive approach to this problem
would be to simply enumerate all ordered subsets and then check if it is an
instance of the motif.  In modern datasets, the number of edges $m$ is typically
quite large (we analyze a dataset in Section~\ref{sec:experiments} with over two
billion edges), and this approach is impractical even for $l = 2$.  In the
following section, we discuss several faster algorithms for counting the number
of instances of $\delta$-temporal motifs in a temporal graph.

\section{Algorithms}
\label{sec:algorithms}

\begin{figure}[tb]
\centering
\includegraphics[width=\columnwidth]{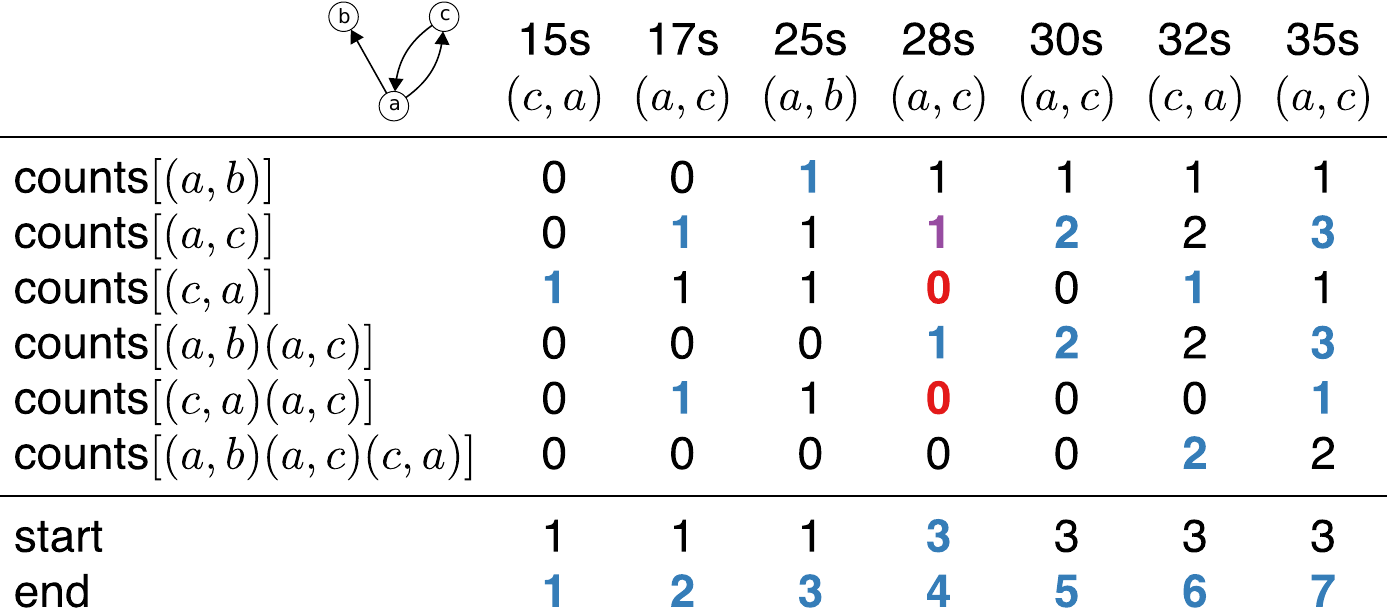}
\vspace{-1mm}
\caption{
Example execution of Alg.~\ref{alg:general} for counting instances of the
$\delta$-temporal motif $M$ in Fig.~\ref{fig:intro1}.  Each column shows the
value of counters at the end of the for loop that processes temporal edges.
Color indicates change in the variable: incremented (blue), decremented (red),
incremented and decremented (purple), or no change (black).  At the end of
execution, $\counts[(a,b)(a,c)(c,a)] = 2$ for the two instances of the temporal
motif $M$ with center node $a$.  Here we only show the counters needed to count
$M$; in total, Alg.~\ref{alg:general} maintains 39 total counters for this input
edge sequence, 25 of which are non-zero.
}
\label{fig:intro2}
\end{figure}

We now present several algorithms for exactly counting the number of instances
of $\delta$-temporal motifs in a temporal graph.  We first present a general
counting algorithm in Section~\ref{sec:general_framework}, which can count
instances of any $k$-node, $l$-edge temporal motif faster than simply
enumerating over all size-$l$ ordered subsets of edges.  This algorithm is
optimal for counting $2$-node temporal motifs in the sense that it is linear in
the number of edges in the temporal graph.  In Section~\ref{sec:faster_algs}, we
provide faster, specialized algorithms for counting specific types of $3$-node,
$3$-edge temporal motifs (Fig.~\ref{fig:three_edge_motifs}).

\subsection{General counting framework}\label{sec:general_framework}

\begin{algorithm}[tb]
  \SetKw{KwTo}{in}\SetKwFor{For}{for}{\string:}{}%
  \SetFuncSty{emph}
  \SetKwProg{myproc}{Procedure}{}{}
  \DontPrintSemicolon
  \SetKw{KwTo}{in}\SetKwFor{For}{for}{\string:}{}%
  \SetFuncSty{emph}
  \SetKwFunction{increment}{IncrementCounts}
  \SetKwFunction{decrement}{DecrementCounts}  
  \KwIn{Sequence $S'$ of edges $(e_1 = (u_1, v_1), t_1), \ldots,$ $(e_L, t_L)$ with
  $t_1 < \ldots < t_L$, time window $\delta$
  }
  \KwOut{Number of instances of each $l$-edge $\delta$-temporal motif $M$ contained in the sequence}
  \caption{Algorithm for counting the
    number of instances of all possible $l$-edge $\delta$-temporal motifs
    in an ordered sequence of temporal edges.
    We assume the keys of $\counts[\cdot]$ are accessed in order of length.}
  $\tstart \leftarrow 1$,  $\counts \leftarrow$ Counter(default = 0)\;
  \For{$\textnormal{end} = 1, \ldots, L$}{
  	\While{$t_{\tstart} + \delta < t_{\tend}$}{
            \decrement{$e_{\tstart}$},
	        $\tstart \pluseq 1$\;
	}
	\increment{$e_{\tend}$}\;
  }
  \Return{$\counts$}\;
  \myproc{\decrement{$e$}}{
    $\counts[e] \minuseq 1$\;
      \For{\textnormal{$\suffix$ \KwTo $\counts.\keys$ of length $< l - 1$}}{
          $\counts[\concat(e, \suffix)] \minuseq \counts[\suffix]$\;
    }
 }
  \myproc{\increment{$e$}}{
      \For{$\prefix$ \KwTo $\counts.\keys.\reverse()$ of length $< l $}{
            $\counts[\concat(\prefix, e)] \pluseq \counts[\prefix]$
        }
      $\counts[e] \pluseq 1$\;
   }
  \label{alg:general}
\end{algorithm}

We begin with a general framework for counting the number of instances of a
$k$-node, $l$-edge temporal motif $M$.  To start, consider $H$ to be the static
directed graph induced by the edges of $M$. A sequence of temporal edges $S$ is
an instance of $M$ if and only if the static subgraph induced by edges in $S$ is
isomorphic to $H$, the ordering of the edges in $S$ matches the order in $M$,
and all the edges in $S$ span a time window of at most $\delta$ time units.
This leads to the following general algorithm for counting instances of $M$ in a
temporal graph $T$:
\begin{enumerate}[noitemsep,topsep=10pt]
\item
Identify all instances $H'$ of the static motif $H$ induced by $M$ within the
static graph $G$ induced by the temporal graph $T$ (e.g., there are three 
instances of $H$ induced by $M$ in Fig.~\ref{fig:intro1}).
\item
For each static motif instance $H'$, gather all temporal edges between pairs of
nodes forming an edge in $H'$ into an ordered sequence $S' =$ $(u_1, v_1, t_1)$,
$\ldots$, $(u_L, v_L, t_L)$.
\item
Count the number of (potentially non-contiguous) subsequences of edges in $S'$
occurring within $\delta$ time units that correspond to instances of $M$.
\end{enumerate}

The first step can use known algorithms for enumerating motifs in static
graphs~\cite{wernicke2006fanmod}, and the second step is a simple matter of
fetching the appropriate temporal edges.  To perform the third step efficiently,
we develop a dynamic programming approach for counting the number of
subsequences (instances of motif $M$) that match a particular pattern within a
larger sequence ($S'$).  The key idea is that, as we stream through an input
sequence of edges, the count of a given length-$l$ pattern (i.e., motif) with a
given final edge is computed from the current count of the length-($l-1$) prefix
of the pattern.  Inductively, we maintain auxiliary counters of all of the
prefixes of the pattern (motif).
Second, we also require that all edges in the motif be at most $\delta$ time apart.
Thus, we use the notion of a moving time window such that any two edges in the
time window are at most $\delta$ time apart.  The auxiliary counters now keep
track of only the subsequences occurring within the current time window.
Last, it is important to note that the algorithm only \emph{counts} the number of instances of motifs rather than \emph{enumerating} them. 

Alg.~\ref{alg:general} counts \emph{all} possible $l$-edge motifs that occur in
a given sequence of edges.  The data structure $\counts[\cdot]$ maintains
auxiliary counts of all (ordered) patterns of length at most $l$. Specifically,
$\counts[e_1\cdots e_r]$ is the number of times the subsequence $[e_1\cdots
  e_r]$ occurs in the current time window (if $r < l$) or the number of times
the subsequence has occurred within all time windows of length $\delta$ (if $r =
l$).  We also assume the keys of $\counts[\cdot]$ are accessed in order of
length.  Moving the time window forward by adding a new edge into the window,
all edges $(e = (u, v), t)$ farther than $\delta$ time from the new edge are
removed from the window and the appropriate counts are decremented (the
\emph{DecrementCounts()} method).  First, the single edge counts ($[e]$) are
updated. Based on these updates, length-$2$ subsequences formed with $e$ as its
first edge are updated and so on, up through length-($l-1$) subsequences. On the
other hand, when an edge $e$ is added to the window, similar updates take place,
but in reverse order, from longest to shortest subsequences, in order to
increment counts in subsequences where $e$ is the last edge (the
\emph{IncrementCounts()} method). Importantly, length-$l$ subsequence counts are
incremented in this step but never decremented.  As the time window moves from
the beginning to the end of the sequence of edges, the algorithm
accumulates counts of all length-$l$ subsequences in all possible time windows
of length $\delta$.

Fig.~\ref{fig:intro2} shows the execution of the Alg.~\ref{alg:general} for a
particular sequence of edges.  Note that the figure only displays values of
$\counts[\cdot]$ for contiguous subsequences of the motif $M$, but the algorithm
keeps counts for other subsequences as well.  In general, there are $O(l^2)$
contiguous subsequences of an $l$-edge motif $M$, and there are $O(\lvert H
\rvert^l)$ total keys in $\counts[\cdot]$, where $\lvert H \rvert$ is the number
of edges in the static subgraph $H$ induced by $M$, in order to count all
$l$-edge motifs in the sequence (i.e., not just motif $M$).

We now analyze the complexity of the overall 3-step algorithm.  We assume that
the temporal graph $T$ has edges sorted by timestamps, which is reasonable if
edges are logged in their order of occurrence, and we pre-process $T$ in linear
time such that we can access the sorted list of all edges between $u$ and $v$ in
$O(1)$ time.  Constructing the time-sorted sequence $S'$ in step 2 of the
algorithm then takes $O(\log({\lvert H \rvert}) {\lvert S' \rvert})$ time.
Each edge inputted to
Alg.~\ref{alg:general} is processed exactly twice: once to increment counts when
it enters the time window and once to decrement counts when it exits the time
window.  As presented in Alg.~\ref{alg:general}, each update changes $O(\vert H \vert^l)$
counters resulting in an overall complexity of $O(\vert H \vert^l {\lvert S' \rvert})$.
However, one could modify Alg.~\ref{alg:general} to only update counts for
contiguous subsequences of the sequence $M$, which would change $O(l^2)$
counters and have overall complexity $O(l^2 {\lvert S' \rvert})$.  We are typically only
interested in small constant values of $\vert H \rvert$ and $l$ (for our
experiments in Section~\ref{sec:experiments}, $\lvert H \rvert \le 3$ and $l =
3$), in which case the running time is linear in the size of the input to the
algorithm, i.e., $O({\lvert S' \rvert})$.

In the remainder of this section we analyze our 3-step algorithm with respect to
different types of motifs (2-node, stars, and triangles) and argue benefits as
well as deficiencies of the proposed framework.  We show that for $2$-node
motifs, our general counting framework takes time linear in the total number of
edges $m$.  Since all the input data needs to be examined for computing exact
counts, this means the algorithm is optimal for $2$-node motifs. However, we
also show that for star and triangle motifs the algorithm is not optimal, which
then motivates us to design faster algorithms in Sec.~\ref{sec:faster_algs}.

\xhdr{General algorithm for 2-node motifs}
We first show how to map $2$-node motifs to the framework described above.  Any
induced graph $H$ of a $2$-node $\delta$-temporal motif is either a single or a
bidirectional edge.  In either case, it is straightforward to enumerate over all
instances of $H$ in the static graph.  This leads to the following procedure:
(1) for each pair of nodes $u$ and $v$ for which there is at least one edge,
gather and sort the edges in either direction between $u$ and $v$; (2) call
Alg.~\ref{alg:general} with these edges.  The obtain the total motif count
the counts from each call to
Alg.~\ref{alg:general} are then summed together.

We only need to input each edge to Alg.~\ref{alg:general} once, and
under the assumption that we can
access the sorted directed edges from one node to another in $O(1)$ time, the
merging of edges into sorted order takes linear time.  Therefore, the total
running time is $O(2^lm)$, which is linear in the number of temporal edges $m$.  
We are mostly interested in small patterns, i.e., cases
when $l$ is a small constant.  Thus, this methodology is optimal (linear in the
input, $m$) for counting $2$-node $\delta$-temporal motif instances.

\begin{figure}[tb]
\centering
\resizebox {\columnwidth} {!} {\input{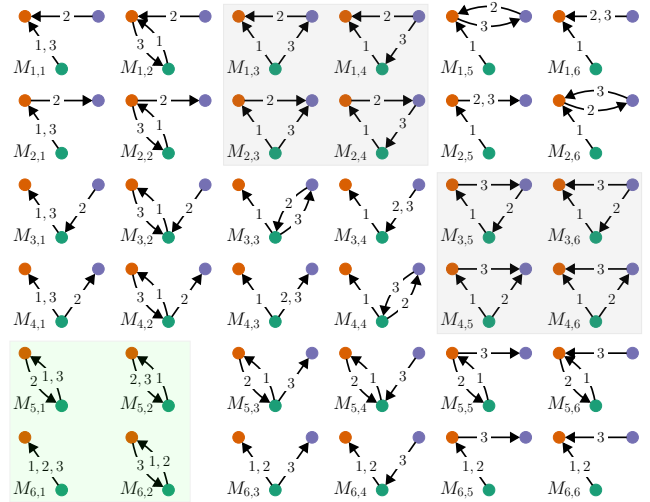}}
\caption{%
All $2$-node and $3$-node, $3$-edge $\delta$-temporal motifs.  The green
background highlights the four $2$-node motifs (bottom left) and the grey
background highlights the eight triangles.  The 24 other motifs are stars.  We
index the 36 motifs $M_{i,j}$ by 6 rows and 6 columns.  The first edge in each
motif is from the green to the orange node.  The second edge is the same along
each row, and the third edge is the same along each column. 
}
\label{fig:three_edge_motifs}
\end{figure}

\xhdr{General algorithm for star motifs}
Next, we consider $k$-node, $l$-edge star motifs $M$, whose induced static graph
$H$ consists of a center node and $k-1$ neighbors, where edges may occur in
either direction between the center node and a neighbor node.  For example, in the
top left corner of Fig.~\ref{fig:three_edge_motifs}, $M_{1,1}$ is a star motif
with all edges pointing toward the center node.  In such motifs, the induced
static graph $H$ contains at most $2(k - 1) = 2k - 2$ static edges---one
incoming and outgoing edge from the center node to each neighbor node.  We have the
following method for counting the number of instances of $k$-node, $l$-edge star
motifs: (1) for each node $u$ in the static graph and for each unique set of $k
- 1$ neighbors, gather and sort the edges in either direction between $u$ and
the neighbors; (2) count the number of instances of $M$ using
Alg.~\ref{alg:general}.  The counts from each call to
Alg.~\ref{alg:general} are summed over all center nodes.

The major drawback of this approach is that we have to loop over each
size-$(k-1)$ neighbor set.  This can be prohibitively expensive even when $k = 3$
if the center node has large degree.  In Section~\ref{sec:faster_algs}, we
shall design an algorithm that avoids this issue for the case when the star
motif has $l = 3$ edges and $k = 3$.

\xhdr{General algorithm for triangle motifs}
In triangle motifs, the induced graph $H$ consists of 3 nodes and at least one
directed edge between any pair of nodes (see Fig.~\ref{fig:three_edge_motifs}
for all eight of the $3$-edge triangle motifs).  The induced static graph $H$ of $M$
contains at least three and at most six static edges.  A straightforward
algorithm for counting $l$-edge triangle motifs in a temporal graph $T$ is:
\begin{enumerate}[noitemsep,topsep=\enumtopsep]
\item Use a fast static graph triangle enumeration algorithm to find all
  triangles in the static graph $G$ induced by $T$~\cite{latapy2008main}.
\item For each triangle $(u, v, w)$, merge all temporal edges from each pair of nodes to
  get a time-sorted list of edges.  Use Alg.~\ref{alg:general} to count the
  number of instances of $M$.
\end{enumerate}
This approach is problematic as the edges between a pair of nodes may
participate in many triangles. Fig.~\ref{fig:worstcase} shows a worst-case example for
the motif $M =$ $(w, u, t_1')$, $(w, v, t_2')$, $(u, v, t_3')$ with $\delta = \infty$.
In this case, the timestamps are ordered by their index. There are $m-2n$ edges between $u$ and
$v$, and each of these edges forms an instance of $M$ with every $w_i$. Thus,
the overall worst-case running time of the algorithm is $O(\triangleenum +
m\tau)$, where $\triangleenum$ is the time to enumerate the number of triangles
$\tau$ in the static graph.  In the following section, we devise an algorithm
that significantly reduces the dependency on $\tau$ from linear to sub-linear
(specifically, $\sqrt{\tau}$) when there are $l = 3$ edges.

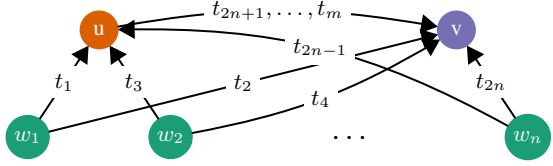
\begin{figure}[t]
\begin{center}
\scalebox{0.95}{

\definecolor{mygreen}{RGB}{27,158,119}
\definecolor{myorange}{RGB}{217,95,2}
\definecolor{mypurple}{RGB}{117,112,179}

\renewcommand*{\VertexInterMinSize}{15pt}
\renewcommand*{\VertexSmallMinSize}{15pt}

\begin{tikzpicture}
\renewcommand*{\VertexLineColor}{myorange}
\renewcommand*{\VertexLightFillColor}{myorange}
\renewcommand*{\VertexTextColor}{white}
\Vertex[x=3.5,y=3.5]{u}
\renewcommand*{\VertexLineColor}{mypurple}
\renewcommand*{\VertexLightFillColor}{mypurple}
\Vertex[x=8.5,y=3.5]{v}

\node[draw=none] (vd) at (7,2) {\large $\ldots$};

\renewcommand*{\VertexLineColor}{mygreen}
\renewcommand*{\VertexLightFillColor}{mygreen}
\node[VertexStyle](w1) at (2.5,2) {$w_1$};
\node[VertexStyle](w2) at (4.5,2) {$w_2$};
\node[VertexStyle](wn) at (9.5,2) {$w_n$};

\SetUpEdge[style={->,>=triangle 60, black, bend left=10},
  label={\footnotesize $t_{2n+1}, \ldots, t_{m}$}]
\Edge(u)(v)
\SetUpEdge[style={->,>=triangle 60, black}]

\Edge[label=$t_1$](w1)(u)
\Edge[label=$t_2$](w1)(v)
\Edge[label=$t_3$](w2)(u)
\Edge[label=$t_4$, style = {->, >=triangle 60,bend right=10}](w2)(v)
\Edge[label=$t_{2n-1}$, style={->,>=triangle 60,bend right=15}](wn)(u)
\Edge[label=$t_{2n}$](wn)(v)

\node[draw=none] (vd) at (7,2) {\large $\ldots$};

\end{tikzpicture}}
\end{center}
\caption{Worst-case example for counting triangular motifs with
Alg.~\ref{alg:general}.}
\label{fig:worstcase}
\end{figure}

\subsection{Faster algorithms}\label{sec:faster_algs}

The general counting algorithm from the previous subsection counts the number of
instances of any $k$-node, $l$-edge $\delta$-temporal motif, and is also optimal
for $2$-node motifs.  However, the computational cost may be expensive for other
motifs such as stars and triangles.  We now develop specialized algorithms that
count certain motif classes faster.  Specifically, we design faster algorithms
for counting all 3-node, 3-edge star and triangle motifs
(Fig.~\ref{fig:three_edge_motifs} illustrates these motifs).  Our algorithm for
stars is linear in the input size, so it is optimal up to constant factors.

\xhdr{Fast algorithm for 3-node, 3-edge stars}
With $3$-node, $3$-edge star motifs, the key drawback of using the previous algorithmic approach would be that
we would have to loop over all pairs of neighbors given a center node.  Instead, we
will count all instances of star motifs for a given center node in just a single
pass over the edges adjacent to the center node.

We use a dynamic programming approach for counting star motifs. 
First, note that every temporal edge in a star with center $u$
is defined by (1) a neighbor node, (2) a direction of the edge (outward from or
inward to $u$), and (3) the timestamp. With this insight we then notice that there are 3 classes of star motifs on 3 nodes and 3 edges:
\begin{center}
\vspace{-0.1cm}
\scalebox{0.6}{

\begin{tikzpicture}

\colorlet{TufteRed}{red!80!black}
\newcommand{\ax}{0.0}
\newcommand{\ay}{0.0}
\newcommand{\bx}{-1.25}
\newcommand{\by}{1.25}
\newcommand{\cx}{1.25}
\newcommand{\cy}{1.25}
\newcommand{\tfirst}{1}
\newcommand{\tsecond}{2}
\newcommand{\tthird}{3}
\newcommand{\figfontsize}{\Large}
\newcommand{\xshift}{4}

\renewcommand*{\VertexLineColor}{white}
\node[draw=none] (pre) at (\ax,\by) {\Large pre};
\node[draw=none] (pre) at (\xshift+\ax,\by) {\Large post};
\node[draw=none] (pre) at (2*\xshift+\ax,\by) {\Large mid};

\renewcommand*{\VertexInterMinSize}{11pt}
\renewcommand*{\VertexSmallMinSize}{11pt}
\definecolor{mygreen}{RGB}{27,158,119}
\definecolor{myorange}{RGB}{217,95,2}
\definecolor{mypurple}{RGB}{117,112,179}

\SetVertexNoLabel
\renewcommand*{\VertexLineColor}{mygreen}
\renewcommand*{\VertexLightFillColor}{mygreen}
\Vertex[x=\ax,y=\ay]{a}
\renewcommand*{\VertexLineColor}{myorange}
\renewcommand*{\VertexLightFillColor}{myorange}
\Vertex[x=\bx,y=\by]{b}
\renewcommand*{\VertexLineColor}{mypurple}
\renewcommand*{\VertexLightFillColor}{mypurple}
\Vertex[x=\cx,y=\cy]{c}
\SetUpEdge[style={-,ultra thick,black}, label={\figfontsize $\tfirst,\tsecond$}]\Edge(a)(b)
\SetUpEdge[style={-,ultra thick,black}, label={\figfontsize $\tthird$}]\Edge(a)(c)

\renewcommand*{\VertexLineColor}{mygreen}
\renewcommand*{\VertexLightFillColor}{mygreen}
\Vertex[x=\xshift+\ax,y=\ay]{a}
\renewcommand*{\VertexLineColor}{myorange}
\renewcommand*{\VertexLightFillColor}{myorange}
\Vertex[x=\xshift+\bx,y=\by]{b}
\renewcommand*{\VertexLineColor}{mypurple}
\renewcommand*{\VertexLightFillColor}{mypurple}
\Vertex[x=\xshift+\cx,y=\cy]{c}
\SetUpEdge[style={-,ultra thick,black}, label={\figfontsize $\tfirst$}]\Edge(a)(b)
\SetUpEdge[style={-,ultra thick,black}, label={\figfontsize $\tsecond,\tthird$}]\Edge(a)(c)

\renewcommand*{\VertexLineColor}{mygreen}
\renewcommand*{\VertexLightFillColor}{mygreen}
\Vertex[x=2*\xshift+\ax,y=\ay]{a}
\renewcommand*{\VertexLineColor}{myorange}
\renewcommand*{\VertexLightFillColor}{myorange}
\Vertex[x=2*\xshift+\bx,y=\by]{b}
\renewcommand*{\VertexLineColor}{mypurple}
\renewcommand*{\VertexLightFillColor}{mypurple}
\Vertex[x=2*\xshift+\cx,y=\cy]{c}
\SetUpEdge[style={-,ultra thick,black}, label={\figfontsize $\tfirst,\tthird$}]\Edge(a)(b)
\SetUpEdge[style={-,ultra thick,black}, label={\figfontsize $\tsecond$}]\Edge(a)(c)

\end{tikzpicture}
}
\vspace{-0.1cm}  
\end{center}
where each class has $2^3 = 8$ motifs for each of the possible directions on the
three edges.

Now, suppose we process the time-ordered
sequence of edges containing the center node $u$.  We maintain the
following counters when processing an edge with timestamp $t_j$:
\begin{itemize}[noitemsep,topsep=\enumtopsep]
\item $\presum[\dir1, \dir2]$ is the number of sequentially ordered pairs of
  edges in $[t_j - \delta, t_j)$ where the first edge points in direction
    $\dir1$ and the second edge points in direction $\dir2$
\item $\postsum[\dir1, \dir2]$ is the analogous counter for the time window
  $(t_j, t _j + \delta]$.
\item $\middlesum[\dir1, \dir2]$ is the number of pairs of edges where the first
  edge is in direction $\dir1$ and occurred at time $t < t_j$ and the second edge
  is in direction $\dir2$ and occurred at time $t' > t_j$ such that $t' - t \le
  \delta$.
\end{itemize}

If we are currently processing an edge, the ``pre'' class gets
$\presum[\dir1, \dir2]$ new motif instances for any choice of directions $\dir1$
and $\dir2$ (specifying the first two edge directions) and the current edge serves as
the third edge in the motif (hence specifying the third edge direction).
Similar updates are made with the $\postsum[\cdot,\cdot]$ and $\middlesum[\cdot,\cdot]$
counters, where the current edge serves as the first or second edge in the motif, respectively.

In order for our algorithm to be efficient, we must quickly update our counters.
To aid in this, we introduce two additional counters:
\begin{itemize}[noitemsep,topsep=\enumtopsep]
\item $\prenodecount[\dir, v_i]$ is the number of times node $v_i$ has appeared
  in an edge with $u$ with direction $\dir$ in the time window $[t_j - \delta, t_j)$
\item $\postnodecount[\dir, v_i]$ is the analogous counter but for the time
  window $(t_j, t_j + \delta]$.
\end{itemize}

Following the ideas of Alg.~\ref{alg:general}, it is easy to
update these counters when we process a new edge.
Consequently, $\presum[\cdot,\cdot]$, $\postsum[\cdot,\cdot]$, and
$\middlesum[\cdot,\cdot]$ can be maintained when processing an edge
with just a few simple primitives:
\begin{itemize}[noitemsep,topsep=\enumtopsep]
\item
  \emph{Push()} and \emph{Pop()} update the counts for $\prenodecount[\cdot,\cdot]$,
  $\postnodecount[\cdot,\cdot]$, $\presum[\cdot,\cdot]$ and $\postsum[\cdot,\cdot]$ when edges enter and leave the time windows $[t_j - \delta, t_j)$ and $(t_j, t_j + \delta]$.
\item
  \emph{ProcessCurrent()} updates motif counts involving the current edge
  and updates the counter $\middlesum[\cdot,\cdot]$.
\end{itemize}

We describe the general procedure in Alg.~\ref{alg:fast_framework}, which
will also serve as the basis for our fast triangle counting procedure, and
Alg.~\ref{alg:fast_wedges} implements the subroutines \emph{Push()},
\emph{Pop()}, and \emph{ProcessCurrent()} for counting instances of $3$-node, $3$-edge star
motifs.  The $\globalpre[\cdot,\cdot,\cdot]$, $\globalpost[\cdot,\cdot,\cdot]$, and $\globalmid[\cdot,\cdot,\cdot]$ counters in
Alg.~\ref{alg:fast_wedges} maintain the counts of the three different
classes of stars described above.

Finally, we note that our counting scheme incorrectly includes instances of
$2$-node motifs such as $M = (u, v_i, t_1)$, $(u, v_i, t_2)$, $(u, v_i, t_3)$,
but we can use the efficient $2$-node motif counting algorithm to account for
this.  Putting everything together, we have the following procedure:
\begin{enumerate}[noitemsep,topsep=\enumtopsep]
\item For each node $u$ in the temporal graph $T$, get a time-ordered
  list of all edges containing $u$.
\item Use Algs.~\ref{alg:fast_framework}~and~\ref{alg:fast_wedges} to count
  star motif instances.
\item For each neighbor $v$ of a star center $u$, subtract the 2-node motif
  counts using Alg.~\ref{alg:general}.
\end{enumerate}

If the $m$ edges of $T$ are time-sorted, the first step can be done in linear
time.  The second and third steps run in linear time in the input size.  Each
edge is used in steps 2 and 3 exactly twice: once for each end point as the
center node.  Thus, the overall complexity of the algorithm is $O(m)$,
which is optimal up to constant factors.

\begin{algorithm}[tb]
  \SetKw{KwTo}{in}\SetKwFor{For}{for}{\string:}{}%
  \SetFuncSty{emph}
  \SetKwProg{myproc}{Procedure}{}{}
  \DontPrintSemicolon
  \caption{Algorithmic framework for faster counting of $3$-node, $3$-edge star and
  triangle temporal motifs.  The fast star counting method (Alg.~\ref{alg:fast_wedges})
  and triangle counting method (Alg.~\ref{alg:fast_triangles}) implement
  different versions of the \emph{Push()}, \emph{Pop()}, and \emph{ProcessCurrent()} subroutines.}
  \SetKwFunction{push}{Push}
  \SetKwFunction{pop}{Pop}
  \SetKwFunction{process}{ProcessCurrent}
  \KwIn{Sequence of edges $(e_1 = (u_1, v_1), t_1), \ldots, (e_L, t_L)$ with
  $t_1 < \ldots < t_L$, time window $\delta$}
  Initialize counters $\prenodecount$, $\postnodecount$, $\middlesum$, $\presum$, and $\postsum$;
  $\tstart \leftarrow 1$, $\tend \leftarrow 1$\;
  \For{$j = 1, \ldots, L$}{
      \While{$t_{\tstart} + \delta < t_j$}{
         \pop{$\prenodecount$, $\presum$, $e_{\tstart}$},
         $\tstart \pluseq 1$
       }
      \While{$t_{\tend} \le t_j + \delta$}{
          \push{$\postnodecount$, $\postsum$, $e_{\tend}$}, $\tend \pluseq 1$\;
       }
      \pop{$\postnodecount$, $\postsum$, $e_j$}\;
       \process{$e_j$}\;
       \push{$\prenodecount$, $\presum$, $e_j$}\;
  }
  \label{alg:fast_framework}
\end{algorithm}

\begin{algorithm}[tb]
  \SetKw{KwTo}{in}\SetKwFor{For}{for}{\string:}{}%
  \SetFuncSty{emph}
  \SetKwProg{myproc}{Procedure}{}{}
  \DontPrintSemicolon
  \caption{Implementation of Alg.~\ref{alg:fast_framework} subroutines for
    efficiently counting instances of $3$-node, $3$-edge star motifs.  Temporal
    edges are specified by a neighbor $\nbr$, a direction $\dir$ (incoming or
    outgoing), and a timestamp.  The ``:'' notation represents a selection of
    all indices in an array.}
  \SetKwFunction{push}{Push}
  \SetKwFunction{pop}{Pop}
  \SetKwFunction{process}{ProcessCurrent}
  Initialize counters $\globalpre$, $\globalpost$, $\globalmid$\;
  \myproc{\push{$\nodecount$, $\sumtxt$, $e = (\nbr, \dir)$}}{
    $\sumtxt[:, \dir] \pluseq \nodecount[:, \nbr]$\;
    $\nodecount[\dir, \nbr] \pluseq 1$\;
  }
  \myproc{\pop{$\nodecount$, $\sumtxt$, $e = (\nbr, \dir)$}}{
    $\nodecount[\dir, \nbr] \minuseq 1$\;
    $\sumtxt[\dir, :] \minuseq \nodecount[:, \nbr]$\;
  }
  \myproc{\process{$e = (\nbr, \dir)$}}{
    $\middlesum[:, \dir] \minuseq \prenodecount[:, \nbr]$\;
    $\globalpre[:,:,\dir] \pluseq \presum[:,:]$\;
    $\globalpost[\dir, : , :] \pluseq \postsum[:,:]$\;
    $\globalmid[:, \dir, :] \pluseq \middlesum[:,:]$\;
    $\middlesum[\dir, :] \pluseq \postnodecount[:, \nbr]$\;    
  }
  \Return{$\globalpre$, $\globalpost$, $\globalmid$}
  \label{alg:fast_wedges}
\end{algorithm}

\begin{algorithm}[tb]
  \SetKw{KwTo}{in}\SetKwFor{For}{for}{\string:}{}%
  \SetFuncSty{emph}
  \SetKwProg{myproc}{Procedure}{}{}
  \DontPrintSemicolon
  \caption{Implementation of Alg.~\ref{alg:fast_framework} subroutines for
    counting $3$-edge triangle motifs containing a specified pair of nodes $u$
    and $v$.  Temporal edges are specified by a neighbor $\nbr$, a direction
    $\dir$ (incoming or outgoing), an indicator ``$\uorv$'' denoting if the edge
    connects to $u$ or $v$, and a timestamp.  The ``:'' notation represents a
    selection of all indices in an array.}
  \SetKwFunction{push}{Push}
  \SetKwFunction{pop}{Pop}
  \SetKwFunction{process}{ProcessCurrent}
  Initialize counter $\globalall$\;
  \myproc{\push{$\nodecount$, $\sumtxt$, $e = (\nbr, \dir, \uorv)$}}{
    \lIf{$\nbr \in \{u, v\}$}{\Return}
      $\sumtxt[\muorv, :, \dir] \pluseq \nodecount[\muorv, :, \nbr]$\;
    $\nodecount[\uorv,\dir,\nbr] \pluseq 1$\;
  }
  \myproc{\pop{$\nodecount$, $\sumtxt$, $e = (\nbr, \dir, \uorv)$}}{
    \lIf{$\nbr \in \{u, v\}$}{\Return}
    $\nodecount[\uorv, \dir, \nbr] \minuseq 1$\;
    $\sumtxt[\uorv, \dir, :] \minuseq \nodecount[\muorv, :, \nbr]$
  }
  \myproc{\process{$e = (\nbr, \dir, \uorv)$}}{
    \uIf{$\nbr \notin \{u, v\}$}{
      $\middlesum[\muorv, :, \dir] \minuseq \prenodecount[\muorv, :, \nbr]$\;
    $\middlesum[\uorv, \dir, :] \pluseq \postnodecount[\muorv, :, \nbr]$\;   
    }\Else{
      $\utov = (\nbr==u)$ XOR $\dir$\;
      \For{$0 \le i, j, k \le 1$}{
        $\globalall[i, j, k] \pluseq \middlesum[j \text{ XOR } \utov, i, k]$\;
        \hspace{0.75cm}$+ \postsum[i \text{ XOR } \utov, j, 1 - k]$\;
        \hspace{0.75cm}$+ \presum[k \text{ XOR } \utov, 1- i, 1 - j]$\;
      }
    }
  }
  \tcc{$\globalall$ key map to Fig.~\ref{fig:three_edge_motifs}:}
  \tcc{\scriptsize{$[0, 0, 0] \mapsto M_{1,3}$, $[0, 0, 1] \mapsto M_{1, 4}$, $[0, 1, 0] \mapsto M_{2, 3}$}
  \normalsize{}}
  \tcc{\scriptsize{$[0, 1, 1] \mapsto M_{2,4}$, $[1, 0, 0] \mapsto M_{3, 5}$, $[1, 0, 1] \mapsto M_{3, 6}$}\normalsize{}}
  \tcc{\scriptsize{$[1, 1, 0] \mapsto M_{4, 5}$, $[1, 1, 1] \mapsto M_{4, 6}$}\normalsize{}}
\Return{$\globalall$}
\label{alg:fast_triangles}
\end{algorithm}

\xhdr{Fast algorithm for 3-edge triangle motifs}
While our fast star counting routine relied on counting motif instances for all
edges adjacent to a given \emph{node}, our fast triangle algorithm is based on
counting instances for all edges adjacent to a given \emph{pair of nodes}.
Specifically, given a pair of nodes $u$ and $v$ and a list of common neighbors
$w_1, \ldots, w_d$, we count the number of motif instances for triangles $(w_i,
u, v)$.  Given all of the edges between these three nodes, the counting
procedures are nearly identical to the case of stars.  We use the same general
counting method (Alg.~\ref{alg:fast_framework}), but the behavior of the
subroutines \emph{Push()}, \emph{Pop()}, and \emph{ProcessCurrent()} depends on
whether or not the edge is between $u$ and $v$.

These methods are implemented in Alg.~\ref{alg:fast_triangles}.  The input
is a list of edges adjacent to a given pair of neighbors $u$ and $v$, where each
edge consists of four pieces of information: (1) a neighbor node $\nbr$, (2) an
indicator of whether or not the node $\nbr$ connects to node $u$ or node $v$,
(3) the direction $\dir$ of the edge, and (4) the timestamp.  
The node counters ($\prenodecount[\cdot,\cdot,\cdot]$ and $\postnodecount[\cdot,\cdot,\cdot]$) in Alg.~\ref{alg:fast_triangles} have an extra dimension compared
to Alg.~\ref{alg:fast_wedges} to indicate whether the counts correspond to
edges containing node $u$ or node $v$ (denoted by ``$\uorv$'').
Similarly, the sum counters ($\presum[\cdot,\cdot,\cdot]$, $\middlesum[\cdot,\cdot,\cdot]$ and $\postsum[\cdot,\cdot,\cdot]$)  
have an extra dimension to denote if the first edge is incident on node $u$ or node $v$.

Recall that the problem with counting triangle motifs by the general framework
in Alg.~\ref{alg:general} is that a pair of nodes with many edges might
have to be counted for many triangles in the graph.  However, with
Alg.~\ref{alg:fast_triangles}, we can simultaneously count all triangles
adjacent to a given pair of nodes.  What remains is that we must assign each
triangle in the static graph to a pair of nodes.  Here, we propose to assign
each triangle to the pair of nodes in that triangle containing the largest number of edges, which
is sketched in Alg.~\ref{alg:triangle}.
Alg.~\ref{alg:triangle} aims to process as many triangles as
possible for pairs of nodes with many edges.
The following theorem says that this is faster than simply counting for
each triangle (described in Section~\ref{sec:general_framework}).  Specifically,
we reduce $O(m\tau)$ complexity to $O(m\sqrt{\tau})$.
\begin{theorem*}\label{thm:fast_triangles}
In the worse case, Alg.~\ref{alg:triangle} runs in time $O(\triangleenum +
m\sqrt{\tau})$, where $\triangleenum$ is the time to enumerate all triangles in
the static graph $G$, $m$ is the total number of temporal edges, and $\tau$ is
the number of static triangles in $G$.
\end{theorem*}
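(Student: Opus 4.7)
The plan is to charge the total running time of Alg.~\ref{alg:triangle} to (i) the cost of static triangle enumeration, denoted $\triangleenum$, and (ii) the per-pair work spent invoking Alg.~\ref{alg:fast_framework} with Alg.~\ref{alg:fast_triangles} on the edges between a pair $(u,v)$ and its assigned common neighbors. Because each edge in the per-pair list is touched a constant number of times by \emph{Push()}, \emph{Pop()}, and \emph{ProcessCurrent()}, the cost for pair $(u,v)$ is proportional to
\[ L_{uv} \;=\; m_{uv} \;+\; \sum_{w:\, (u,v,w)\text{ assigned to }(u,v)} (m_{uw} + m_{vw}), \]
where $m_{uv}$ is the number of temporal edges between $u$ and $v$. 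Let $c_{uv}$ be the number of triangles assigned to $(u,v)$, so $\sum_{(u,v)} c_{uv} = \tau$. The max-edge assignment rule yields two complementary upper bounds: (a) since $m_{uw}, m_{vw} \le m_{uv}$ for every assigned triangle, $L_{uv} \le m_{uv}(1 + 2 c_{uv})$; and (b) since assigned common neighbors form a subset of all common neighbors, $L_{uv} \le m_{uv} + \deg(u) + \deg(v)$, where $\deg(u)$ is the total number of temporal edges incident to $u$.

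I would then introduce the threshold $T = m/\sqrt{\tau}$ and call pair $(u,v)$ \emph{light} if $m_{uv} \le T$ and \emph{heavy} otherwise. For light pairs with $c_{uv} \ge 1$, bound (a) gives $L_{uv} \le 3 m_{uv} c_{uv} \le 3 T c_{uv}$, and summing yields
\[ \sum_{\text{light } (u,v)} L_{uv} \;\le\; 3T \sum_{(u,v)} c_{uv} \;=\; 3T\tau \;=\; O(m\sqrt{\tau}). \]
For heavy pairs, bound (b) gives $\sum_{\text{heavy}} L_{uv} \le m + 2\sum_u \deg(u)\, d_H(u)$, where $d_H(u) = |\{v : m_{uv} > T\}|$. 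The key inequality is $d_H(u) \le \deg(u)/T$: every heavy pair through $u$ consumes more than $T$ of $u$'s $\deg(u)$ temporal edges. Substituting,
\[ \sum_{\text{heavy } (u,v)} L_{uv} \;\le\; m + \tfrac{2}{T}\sum_u \deg(u)^2 \;\le\; m + \tfrac{2}{T}\cdot 2m\cdot \max_u\deg(u) \;=\; O\!\left(\tfrac{m^2}{T}\right) \;=\; O(m\sqrt{\tau}), \]
using the elementary bound $\sum_u\deg(u)^2 \le (\max_u \deg(u))\sum_u\deg(u) \le 4m^2$.

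Adding the light and heavy contributions, together with the $\triangleenum$ term needed to enumerate the static triangles and determine each triangle's max-edge pair, gives the claimed $O(\triangleenum + m\sqrt{\tau})$ bound.

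The main difficulty is that neither bound on $L_{uv}$ alone is sufficient: bound (a) used by itself produces $O(\sum m_{uv}c_{uv})$, which can be as large as $\Theta(m\tau)$ when a single heavy pair concentrates most triangles; bound (b) used by itself produces $O(\sum_u \deg(u)^2) = \Theta(m^2)$ on star-heavy graphs. The heart of the argument is to use bound (a) precisely on sparse pairs, bound (b) precisely on dense pairs, and pick the crossover threshold $T = m/\sqrt{\tau}$ so that both sums collapse to $O(m\sqrt{\tau})$ simultaneously.
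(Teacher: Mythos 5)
Your proof is correct, and it takes a genuinely different route from the paper's. The paper sorts the node pairs by their temporal-edge multiplicities $\sigma_1 \ge \sigma_2 \ge \cdots$, uses the max-edge assignment rule to derive the combinatorial constraint $p_i \le i$ (a pair's edges are only re-processed on behalf of pairs with at least as many edges), adds $\sum_i p_i \le c\tau$, and then argues via an extremal/rearrangement step that the objective $\sum_i \sigma_i p_i$ is maximized by the configuration $p_i = i$, $\sigma_i = m/j$ for $i \le j = O(\sqrt{\tau})$, yielding $O(m\sqrt{\tau})$. You instead charge the work pair-by-pair with two complementary bounds --- $L_{uv} \le m_{uv}(1+2c_{uv})$, which exploits the assignment rule through the pointwise domination $m_{uw}, m_{vw} \le m_{uv}$, and $L_{uv} \le m_{uv} + \deg(u) + \deg(v)$, which does not use the rule at all --- and then split pairs at the threshold $T = m/\sqrt{\tau}$ so that each bound is applied exactly where it is tight enough. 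Your approach buys a fully explicit, instance-by-instance bound with concrete constants and avoids the paper's somewhat informal ``the summation is maximized when\ldots'' optimization step; the paper's approach is shorter and makes the worst-case configuration (which matches the $\Theta(m\sqrt{\tau})$ lower-bound construction) visible. Two trivial loose ends you may as well tighten: light pairs with $c_{uv}=0$ and heavy pairs' own edges contribute only $O(m)$ in total (as you implicitly assume), and the sum $\sum_{\text{heavy}}(\deg(u)+\deg(v))$ equals $\sum_u \deg(u)\,d_H(u)$ exactly, so your extra factor of $2$ is harmless but unnecessary.
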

\begin{proof}\vspace{-0.25cm}
Let $\sigma_i$ be the number of edges between the $i$th pair of nodes with at
least one edge, and let $p_i \ge 1$ be the number of times that edges on this
pair are used in a call to Alg.~\ref{alg:fast_triangles} by
Alg.~\ref{alg:triangle}.  Since Alg.~\ref{alg:fast_triangles} runs in
linear time in the number of edges in its input, the total running time is on
the order of $\sum_{i}\sigma_ip_i$.

The $\sigma_i$ are fixed, and we wish to find the values of $p_i$ that maximize
the summation.  Without loss of generality, assume that the $\sigma_i$ are in
decreasing order so that the most number of edges between a pair of nodes is
$\sigma_1$.  Consequently, $p_i \le i$. 
Note that each triangle contributes to at most a constant repeat
processing of edges for a given pair of nodes.  Hence, $\sum_{i}p_i \le c\tau$ 
for some constant $c$.  The summation
$\sum_{i}\sigma_ip_i$ is maximized when $p_1 = 1$, $p_2 = 2$, and so on up to
some index $j = O(\sqrt{\tau})$ for which $\textstyle \sum_{i=1}^{j}p_j =
\sum_{i=1}^{j}i = c\tau$.  Now given that the $p_j$ are fixed and the $\sigma_i$
are ordered, the summation is maximized when $\sigma_1 = \sigma_2 = \ldots =
\sigma_j = m / j$.  In this case, $\sum_{i}\sigma_ip_i = \sum_{i=1}^{j}(m/j)(1 +
i) = O(m\sqrt{\tau})$.
\end{proof}

\begin{algorithm}[t!]
  \SetKw{KwTo}{in}\SetKwFor{For}{for}{\string:}{}%
  \SetFuncSty{emph}
  \SetKwProg{myproc}{Procedure}{}{}
  \caption{Sketch of fast algorithm for counting the number of $3$-edge $\delta$-temporal
    triangle motifs in a temporal graph $T$.}
  \DontPrintSemicolon
   Enumerate all triangles in the undirected static graph $G$ of $T$\;
   $\sigma \leftarrow $ number of temporal edges on each static edge $e$ in $G$\;
   \ForEach{static triangle $\Delta = (e_1$, $e_2$, $e_3)$ \KwTo $G$}{
       $e_{\max} = \arg\max_{e \in \Delta}\{\sigma_{e}\}$\;
       \lForEach{$e$ \KwTo $\Delta$}{Add $e_{\max}$ to edge set $\ell_{e}$}
       \small{\tcc{$e' \in \ell_{e}$ if $e \in \Delta$ and $\Delta$ assigned to $e'$}}       
   }
   \ForEach{temporal edge $(e=(u,v), t)$ \KwTo time-sorted $T$}{
       \lForEach{$e'$ \KwTo $\ell_{e}$}{Append $(e, t)$ to temporal-edge list $a_{e'}$}
       \small{\tcc{$(e, t) \in a_{e'}$ if $e \in \Delta$ and $\Delta$ assigned to $e'$}}       
   }
   \ForEach{undirected edge $e$ \KwTo $G$}{
       Update counts using Alg.~\ref{alg:fast_triangles} with input $a_{e}$\;
   }
  \label{alg:triangle}
\end{algorithm}

\section{Experiments}
\label{sec:experiments}

\begin{figure*}[t]
\centering
\includegraphics[width=1.96\columnwidth]{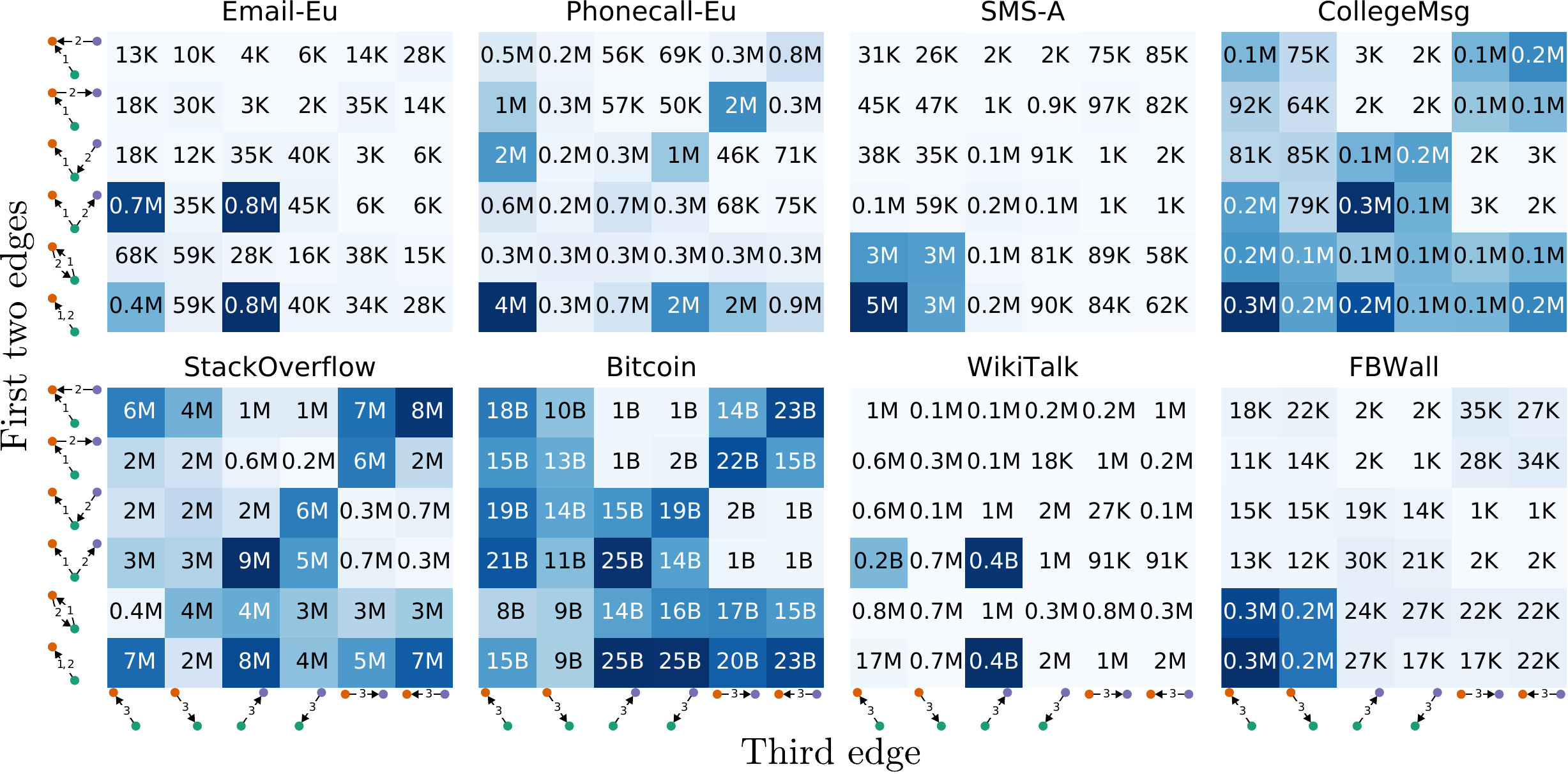}
\caption{%
Counts of instances of all $2$- and $3$-node, $3$-edge $\delta$-temporal motifs
with $\delta =$ 1 hour.  For each dataset, counts in the $i$th row and $j$th
column is the number of instances of motif $M_{i, j}$ (see
Fig.~\ref{fig:three_edge_motifs}); this motif is the union of the two edges in
the row label and the edge in the column label.  For example, there are 0.7
million instances of motif $M_{4,1}$ in the $\emaileu$ dataset. The color for
the count of motif $M_{i,j}$ indicates the fraction over all $M_{i,j}$ on a
linear scale---darker blue means a higher count.
}
\label{fig:raw_counts}
\end{figure*}

Next, we use our algorithms to reveal patterns in a variety of temporal network
datasets.  We find that the number of instances of various $\delta$-temporal
motifs reveal basic mechanisms of the networks.
Datasets and implementations of our algorithms are
available at \url{http://snap.stanford.edu/temporal-motifs}.

\subsection{Data}\label{sec:data}

\setlength{\tabcolsep}{3pt}
\begin{table}[tb]
  \centering \caption{Summary statistics of datasets.}
  \begin{tabular}{l l l l c}
    \toprule 
    dataset & \# nodes & \# static & \# edges & time span  \\
                &                & edges    &                & (days) \\
    \midrule
    \emaileu & 986 & 2.49K & 332K & 803 \\
    \phone & 1.05M & 2.74M & 8.55M & 7 \\
    \sms & 44.1K & 67.2K & 545K & 338 \\
    \messages & 1.90K & 20.3K & 59.8K & 193 \\
    \stackoverflow & 2.58M & 34.9M & 47.9M & 2774 \\
    \bitcoin  & 24.6M & 88.9M & 123M & 1811 \\
    \fbwall & 45.8K & 264K & 856K & 1560 \\
    \wikitalk & 1.09M & 3.13M & 6.10M & 2277 \\
    \phonecallme & 18.7M & 360M & 2.04B & 364 \\
    \smsme & 6.94M & 51.5M & 800M & 89 \\
    \bottomrule
    \end{tabular}\label{tab:datasets}
\end{table}  

We gathered a variety of datasets in order to study the patterns of
$\delta$-temporal motifs in several domains.  The datasets are
described below and summary statistics are in Table~\ref{tab:datasets}.  The
time resolution of the edges in all datasets is one second.

\xhdr{\emaileu}
This dataset is a collection of emails between members of a European
research institution~\cite{leskovec2007graph}.  An edge $(u, v, t)$ signifies
that person $u$ sent person $v$ an email at time $t$.

\xhdr{\phone}
This dataset was constructed from telephone call records for a major European
service provider.  An edge $(u, v, t)$ signifies that person $u$ called person
$v$ starting at time $t$.

\xhdr{\sms}
Short messaging service (SMS) is a texting service provided on mobile
phones.  In this dataset, an edge $(u, v, t)$ means that person $u$ sent
an SMS message to person $v$ at time $t$~\cite{wu2010evidence}.

\xhdr{\messages}
This dataset is comprised of private messages sent on an online social network
at the University of California, Irvine~\cite{panzarasa2009patterns}.  Users
could search the network for others and then initiate conversation based on
profile information.  An edge $(u, v, t)$ means that user $u$ sent a private
message to user $v$ at time $t$.

\xhdr{\stackoverflow}
On stack exchange web sites, users post questions and receive answers from other
users, and users may comment on both questions and answers.  We derive a
temporal network by creating an edge $(u, v, t)$ if, at time $t$, user $u$: (1)
posts an answer to user $v$'s question, (2) comments on user $v$'s question, or
(3) comments on user $v$'s answer.  We formed the temporal network from the
entirety of Stack Overflow's history up to March 6, 2016.

\xhdr{\bitcoin}
Bitcoin is a decentralized digital currency and payment system.  This dataset
consists of all payments made up to October 19, 2014~\cite{kondor2014rich}.
Nodes in the network correspond to Bitcoin addresses, and an individual may have
several addresses.  An edge $(u, v, t)$ signifies that bitcoin was transferred
from address $u$ to address $v$ at time $t$.

%

\xhdr{\fbwall}
The edges of this dataset are wall posts between users on the social network
Facebook located in the New Orleans region~\cite{viswanath2009evolution}.  Any friend of
a given user can see all posts on that user's wall, so communication is public
among friends.  An edge $(u, v, t)$ means that user $u$ posted on user $v$'s
wall at time $t$.

\xhdr{\wikitalk}
This dataset represents edits on user talk pages on
Wikipedia~\cite{leskovec2010governance}.  An edge $(u, v, t)$ signifies that
user $u$ edited user $v$'s talk page at time $t$.

\xhdr{\phonecallme\ and \smsme}
This dataset is constructed from phone call and SMS records of a large
telecommunications service provider in the Middle East.  An edge $(u, v, t)$ in
$\phonecallme$ means that user $u$ initiated a call to user $v$ at time $t$.  An edge $(u, v,
t)$ in $\smsme$ means that user $u$ sent an SMS message to user $v$ at time $t$.
We use these networks for scalability experiments in
Section~\ref{sec:scalability}.

\subsection{Empirical observations of motif counts}\label{sec:empirical}
We first examine the distribution of 2- and 3-node, 3-edge motif instance counts
from 8 of the datasets described in Section~\ref{sec:data} with $\delta = 1$
hour (Fig.~\ref{fig:raw_counts}).  We choose 1 hour for the time
window as this is close to the median time for a node to take part in three
edges in most of our datasets. We make a few empirical observations uniquely
available due to temporal motifs and provide possible explanations for
these observations.

\begin{figure}[tb]
\centering
\includegraphics[width=1.0\columnwidth]{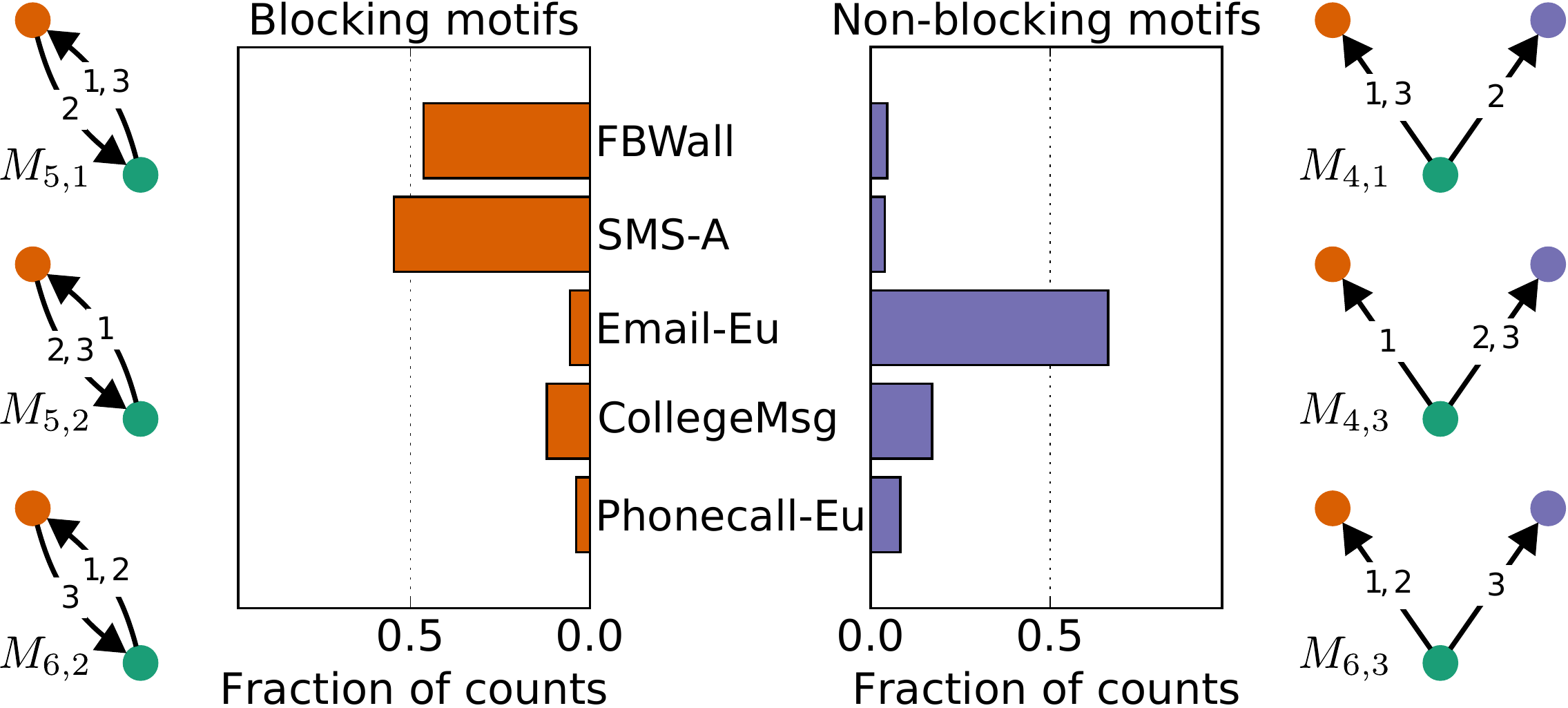}
\caption{%
Fraction of all $2$ and $3$-node, $3$-edge $\delta$-temporal motif counts that
correspond to two groups of motifs ($\delta =$ 1 hour).  Motifs
on the left capture ``blocking'' behavior, common in SMS messaging and Facebook wall
posting, and motifs on the right exhibit ``non-blocking'' behavior,
common in email.
}
\label{fig:blocking}
\end{figure}
\xhdr{Blocking communication}
If an individual typically waits for a reply from one individual before
proceeding to communicate with another individual, we consider it a
\emph{blocking} form of communication. A typical conversation between two
individuals characterized by fast exchanges happening back and forth is blocking
as it requires complete attention of both individuals. We capture this
behavior in the ``blocking motifs'' $M_{5,1}$, $M_{5,2}$ and $M_{6,2}$, which
contain 3 edges between two nodes with at least one edge in either direction
(Fig.~\ref{fig:blocking}, left).  However, if the reply doesn't arrive soon,
we might expect the individual to communicate with others without waiting for a
reply from the first individual. This is a non-blocking form of communication
and is captured by the ``non-blocking motifs'' $M_{4,1}$, $M_{4,3}$ and $M_{6,3}$
having edges originating from the same source but directed to different
destinations (Fig.~\ref{fig:blocking}, right)

The fractions of counts corresponding to the blocking and non-blocking motifs
out of the counts for all 36 motifs in Fig.~\ref{fig:three_edge_motifs}
uncover several interesting characteristics in communication networks ($\delta =
1$ hour; see Fig.~\ref{fig:blocking}). In $\fbwall$ and $\sms$, blocking
communication is vastly more common, while in $\emaileu$ non-blocking
communication is prevalent.  Email is not a dynamic method of
communication and replies within an hour are rare.  Thus, we would expect
non-blocking behavior.  Interestingly, the $\messages$ dataset shows both
behaviors as we might expect individuals to engage in multiple conversations
simultaneously.  In complete contrast, the $\phone$ dataset shows neither
behavior.  A simple explanation is that that a single edge (a phone call)
captures an entire conversation and hence blocking behavior does not emerge.

\begin{figure}[tb]
\centering
\includegraphics[width=\columnwidth]{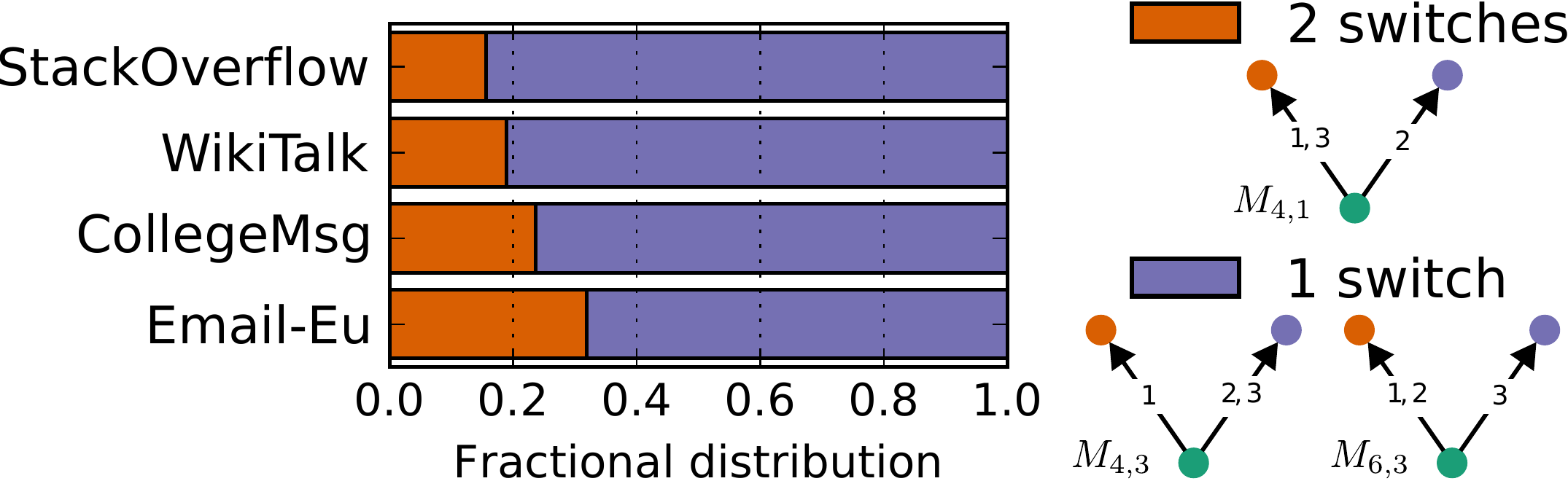}
\caption{%
Distribution of switching behavior amongst the non-blocking motifs.   Switching
is least common on Stack Overflow and most common in email.
}
\label{fig:switching}
\end{figure}
\xhdr{Cost of switching}
Amongst the non-blocking motifs discussed above, $M_{4,1}$ captures two
consecutive switches between pairs of nodes whereas $M_{4,3}$ and $M_{6,3}$ each
have a single switch (Fig.~\ref{fig:switching}, right). Prevalence of
$M_{4,1}$ indicates a lower cost of switching targets, whereas prevalence of the
other two motifs are indicative of a higher cost.  We observe in
Fig.~\ref{fig:switching} that the ratio of 2-switch to 1-switch motif counts
is the least in $\stackoverflow$, followed by $\wikitalk$, $\messages$ and then
$\emaileu$. On Stack Overflow and Wikipedia talk pages, there is a high
cost to switch targets because of peer engagement and depth of discussion.  On the other
hand, in the $\messages$ dataset there is a lesser cost to switch because it
lacks depth of discussion within the time frame of $\delta = $ 1 hour. Finally,
in $\emaileu$, there is almost no peer engagement and cost of switching is
negligible.

\begin{figure}[tb]
\centering
\includegraphics[width=1\columnwidth]{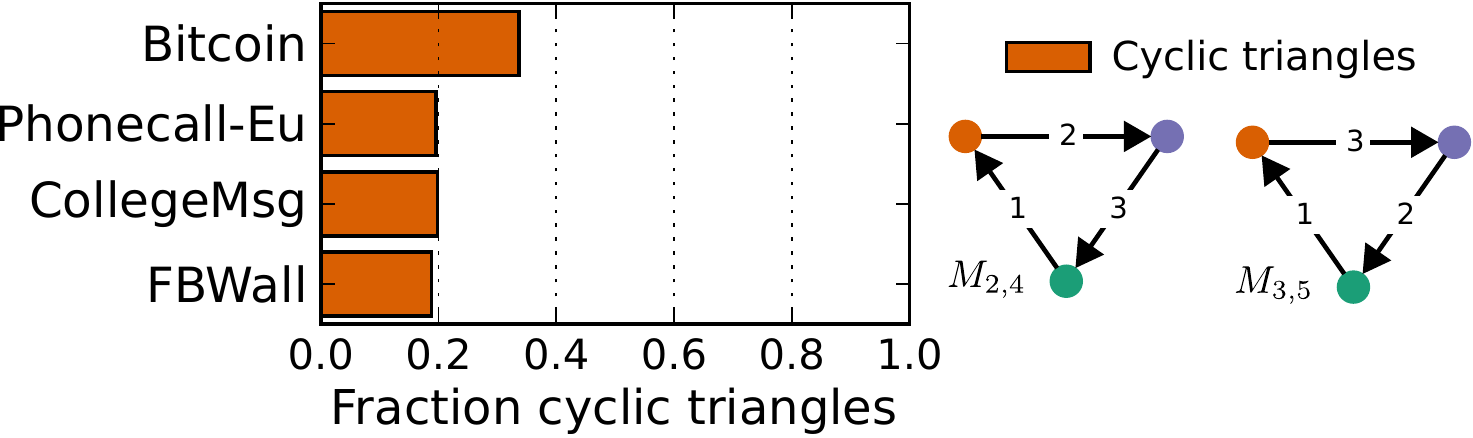}
\caption{%
Fraction of $3$-edge $\delta$-temporal triangle motif counts ($\delta =$ 1 hour) corresponding to
cyclic triangles (right) in the four datasets for which this fraction
is the largest.  $\bitcoin$ has a much higher fraction compared
to all other datasets. 
}
\label{fig:cycles}
\end{figure}

\xhdr{Cycles in $\bitcoin$}
Of the eight $3$-edge triangle motifs, $M_{2,4}$ and $M_{3,5}$ are cyclic, i.e.,
the target of each edge serves as the source of another edge.  We observe in
Fig.~\ref{fig:cycles} that the fraction of triangles that are cyclic is much higher in
$\bitcoin$ compared to any other dataset.
This can be attributed to the transactional nature of $\bitcoin$ where
the total amount of bitcoin is limited. Since remittance (outgoing edges) is typically associated
with earnings (incoming edges), we should expect cyclic behavior.

\begin{figure}[tb]
\centering
\includegraphics[width=0.95\columnwidth]{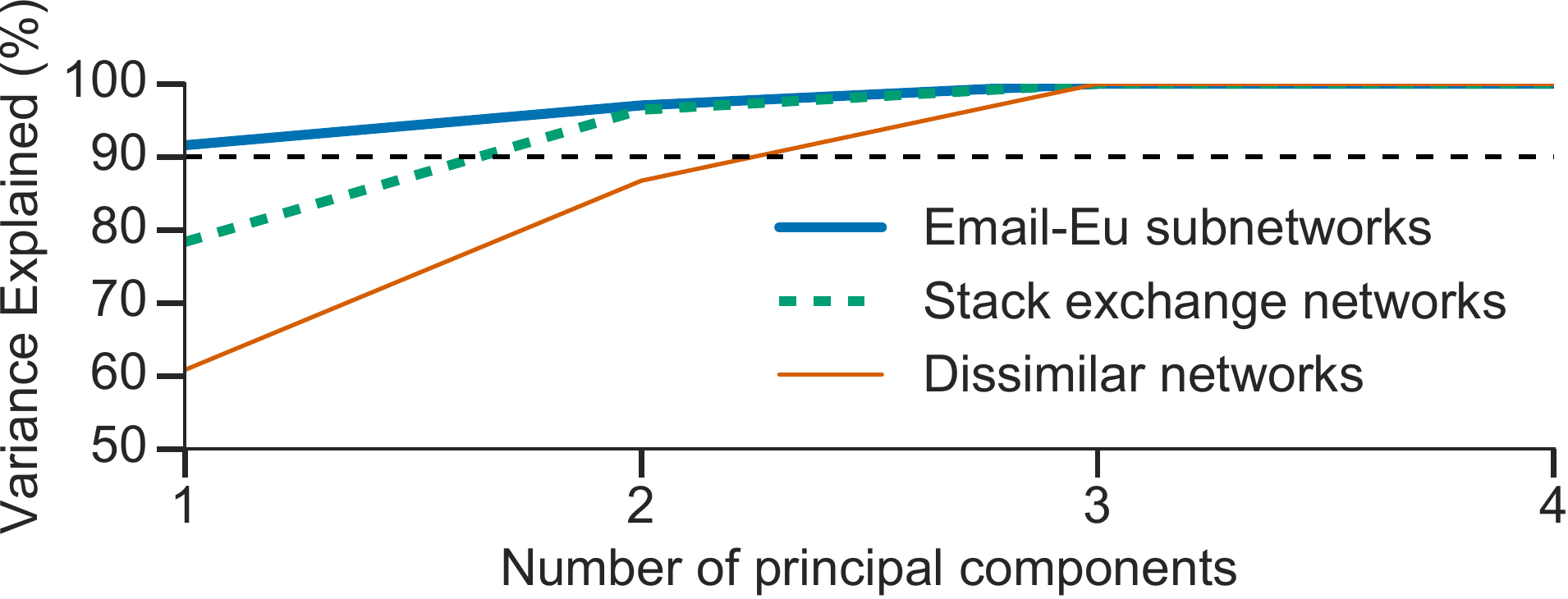}
\caption{%
Percentage of explained variance of relative counts of collections of datasets
plotted as a function of the number of principal components.  In datasets from
the same domain, 90\% of variance is explained with fewer components.
}
\label{fig:similar_counts_PCA}
\end{figure}

\xhdr{Datasets from the same domain have similar counts}
Static graphs from similar domains tend to have similar motif count distributions~\cite{milo2004superfamilies,vazquez2004topological,yaverouglu2014revealing}.
Here, we find similar results in temporal networks.  We formed two collections
of datasets from similar domains.  First, we took subsets of the $\emaileu$
dataset corresponding to email communication within four different departments
at the institution.  Second, we constructed temporal graphs from the stack
exchange communities Math Overflow, Super User, and Ask Ubuntu to study
in conjunction with the $\stackoverflow$ dataset.
We form count distributions by normalizing the counts of the 36 different
motifs in Fig.~\ref{fig:raw_counts}.
For datasets from a similar domain, we expect that if the count distributions 
are similar, then most of the variance is captured by a few principal components.
To compare, we use four datasets from dissimilar domains ($\emaileu$, $\phone$, $\sms$, $\wikitalk$). 
Fig.~\ref{fig:similar_counts_PCA} shows that to explain 90\% variance, $\emaileu$ subnetworks need just one principal component, stack exchange networks need two, and the dissimilar networks need three.

\begin{figure}[tb]
\centering
\includegraphics[width=0.9\columnwidth]{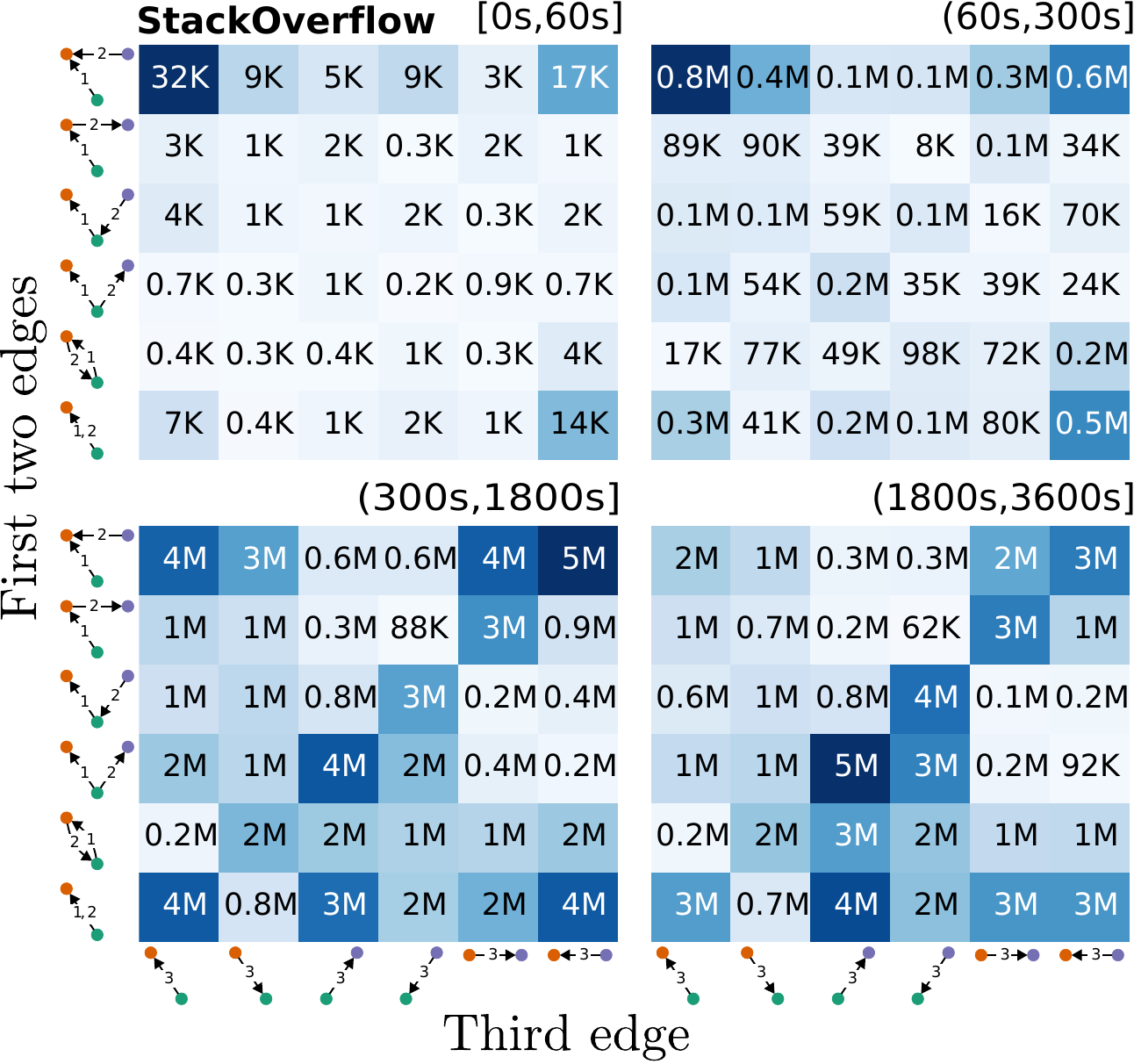}
\caption{%
Counts for all $2$- and $3$-node, $3$-edge $\delta$-temporal motifs in four time
intervals for the $\stackoverflow$ dataset.  For each
interval, the count in the $i$th row and $j$th column is the number of instances
of motif $M_{i, j}$ (see Fig.~\ref{fig:three_edge_motifs}).
}
\label{fig:so_counts_over_time}
\end{figure}

\xhdr{Motif counts at varying time scales} We now explore how motif counts change at different time scales.  For the
$\stackoverflow$ dataset we counted the number of instances of $2$- and
$3$-node, $3$-edge $\delta$-temporal motifs for $\delta = $ 60, 300, 1800, and
3600 seconds (Fig.~\ref{fig:so_counts_over_time}).  These counts determine the
number of motifs that completed in the intervals
[0, 60], (60, 300], (300, 1800s], and (1800, 3600] seconds
(e.g., subtracting 60 second counts from 300 second counts gives the interval (60, 300]).
Observations at smaller timescales reveal phenomenon which start to get eclipsed
at larger timescales.  For instance, on short time scales, motif $M_{1,1}$
(Fig.~\ref{fig:so_counts_over_time}, top-left corner) is quite common.  We
suspect this arises from multiple, quick comments on the original question, so
the original poster receives many incoming edges.  At larger time scales, this
behavior is still frequent but relatively less so.  Now let us compare counts
for $M_{1,5}$, $M_{1,6}$, $M_{2,5}$, $M_{2,6}$ (the four in the top right
corner) with counts for $M_{3,3}$, $M_{3,4}$, $M_{4,3}$, $M_{4,4}$ (the four in
the center). The former counts likely correspond to conversations with the
original poster while the latter are constructed by the same user interacting with
multiple questions.  Between 300 and 1800 seconds (5 to 30 minutes),
the former counts are relatively more common while the latter counts only become
more common after 1800 seconds. A possible explanation is that the typical
length of discussions on a post is about 30 minutes, and later on, users answer
other questions.

\begin{figure}[tb]
\centering
\vspace{-3mm}
\includegraphics[width=0.94\columnwidth]{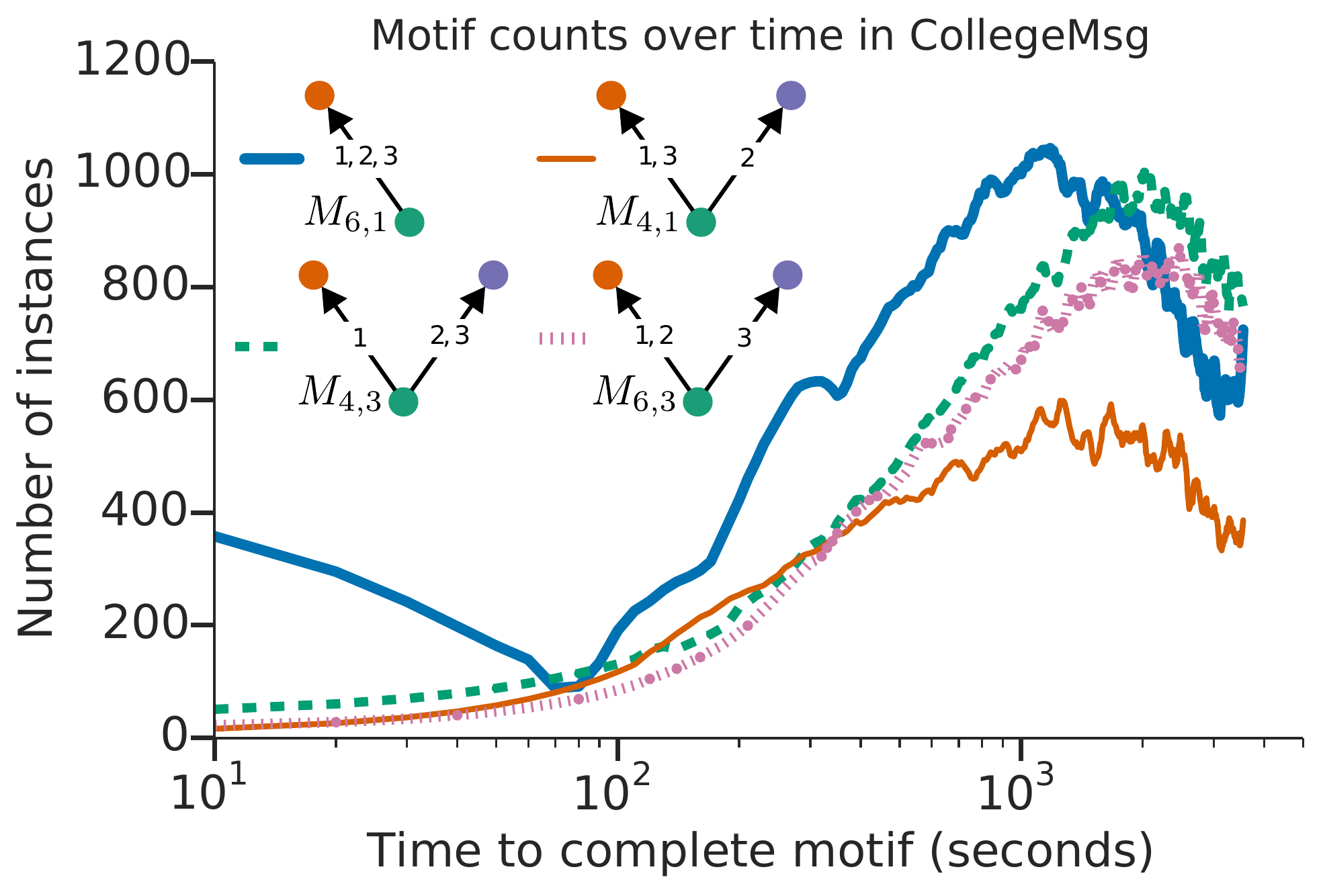}
\vspace{-1mm}
\caption{%
Counts over various time scales for the motifs representing a node sending $3$
outgoing messages to $1$ or $2$ neighbors in the $\messages$ dataset.
}
\label{fig:IM_over_time}
\end{figure}

Next, we examine messaging behavior in the $\messages$ dataset at fine-grained
time intervals.  We counted the number of motifs consisting of a single node
sending three outgoing messages to one or two neighbors (motifs $M_{6,1}$,
$M_{6,3}$, $M_{4,1}$, and $M_{4,3}$) in the time bins $[10(i-1), 10i)$ seconds,
$i = 1, \ldots, 500$ (Fig.~\ref{fig:IM_over_time}).  We first notice that at
small time scales, the motif consisting of three edges to a single neighbor
($M_{6,1}$) occurs frequently.  This pattern could emerge from a succession of
quick introductory messages.  Overall, motif counts increase from roughly 1
minute to 20 minutes and then decline.  Interestingly, after 5 minutes, counts
for the three motifs with one switch in the target ($M_{6,1}$, $M_{6,3}$, and
$M_{4,3}$) grow at a faster rate than the counts for the motif with two switches
($M_{4,1}$).  As mentioned above, this pattern could emerge from a tendency to
send several messages in one conversation before switching to a conversation
with another friend.

\subsection{Algorithm scalability}\label{sec:scalability}

\setlength{\tabcolsep}{2.5pt}
\begin{table}[tb]
\centering \caption{%
Time to count the eight 3-edge $\delta$-temporal triangle motifs
($\delta = 3600$) using the general counting method
(Alg.~\ref{alg:general}) and the fast counting method
(Alg.~\ref{alg:triangle}).  
}
  \begin{tabular}{l c c c c}
    \toprule 
    dataset & \# static                 & time, Alg.~\ref{alg:general} & time, Alg.~\ref{alg:triangle} & speedup \\
                 & triangles  & (seconds)                            & (seconds) & \\
    \midrule
    \wikitalk & 8.11M & 51.1 & 26.6 & 1.92x \\
    \bitcoin  & 73.1M & 27.3K & 483 & 56.5x \\
    \smsme & 78.0M & 2.54K & 1.11K & 2.28x \\
    \stackoverflow & 114M & 783 & 606 & 1.29x \\
    \phonecallme & 667M & 12.2K & 8.59K & 1.42x \\
    \bottomrule
    \end{tabular}\label{tab:scalability}
\end{table}

\begin{figure}[t]
\vspace{-2mm}
\centering
\includegraphics[width=0.94\columnwidth]{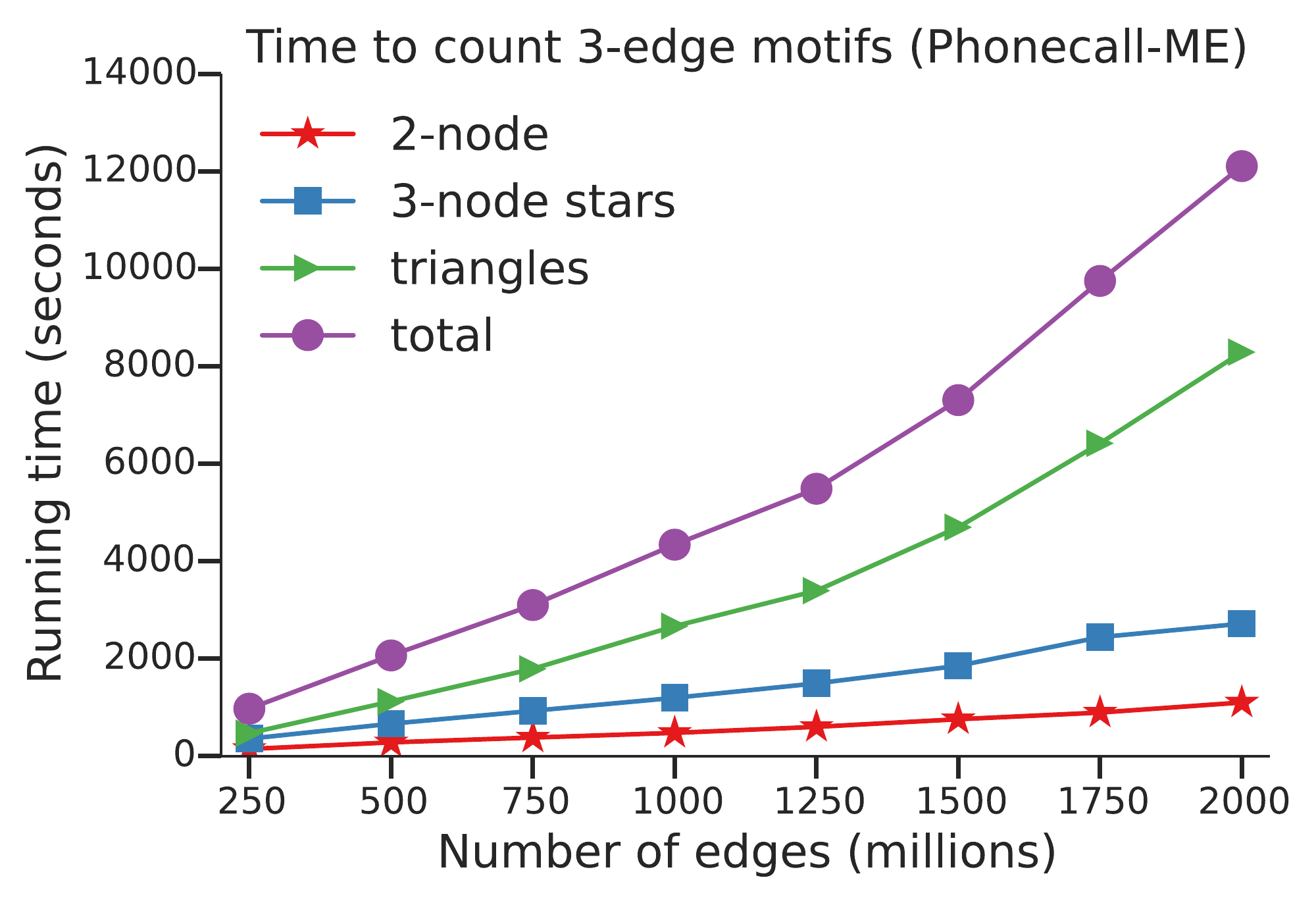}
\vspace{-1mm}
\caption{%
Time to count $3$-edge motifs on the first $k$ temporal million edges
in the $\phonecallme$ as a function of $k$.
}
\label{fig:scalability}
\end{figure}

Finally, we performed scalability experiments of our algorithms.  All
algorithms were implemented in C++, and all experiments ran using a single
thread of a 2.4GHz Intel Xeon E7-4870 processor.  We
did not measure the time to load datasets into memory, but our timings include
all pre-processing time needed by the algorithms (e.g., the triangle
counting algorithms first find triangles in the static graph).  We emphasize
that our implementation is single threaded, and the methods can be sped up
with a parallel algorithm.

First, we used both the general counting method (Alg.~\ref{alg:general})
and the fast counting method (Alg.~\ref{alg:triangle}) to count the number
of all eight 3-edge $\delta$-temporal triangle motifs in our datasets ($\delta =$ 1 hour).
Table~\ref{tab:scalability} reports the running times of the
algorithms for all datasets with at least one million triangles in the static
graph. For all of these datasets, our fast temporal triangle counting algorithm
provides significant performance gains over the general counting method, ranging
between a 1.29x and a 56.5x speedup.  The gains of the fast algorithm are the
largest for $\bitcoin$, which is due to some pairs of
nodes having many edges between them and also participating in many triangles.

Second, we measured the time to count various $3$-edge $\delta$-temporal motifs
in our largest dataset, $\phonecallme$.  Specifically, we measured the time to
compute (1) $2$-node motifs, (2) $3$-node stars, and (3) triangles on the first
$k$ million edges in the dataset for $k = 250, 500, \ldots, 2000$
(Fig.~\ref{fig:scalability}).  The time to compute the $2$-node, $3$-edge
motifs and the $3$-node, $3$-edge stars scales linearly, as expected from our
algorithm analysis.  The time to count triangle motifs grows superlinearly and
becomes the dominant cost when there is a large number of edges.  For practical
purposes, the running times are quite modest.  With two billion edges, our
methods take less than 3.5 hours to complete (executing sequentially).

\section{Discussion}
\label{sec:discussion}

We have developed $\delta$-temporal network motifs as a tool for analyzing
temporal networks.  We introduced a general framework for counting
instances of any temporal motif as well as faster algorithms for
certain classes of motifs and found that motif counts reveal key structural
patterns in a variety of temporal network datasets.
Our work opens a number of avenues for additional research.  First, our fast
algorithms are designed for $3$-node, $3$-edge star and triangle motifs.  We
expect that the same general techniques can be used to count more complex
temporal motifs.  Next, it is important to note that our fast algorithms
only \emph{count} the number of instances of motifs rather than \emph{enumerate}
the instances.  This concept has also been used to accelerate static motif
counting~\cite{pinar2016escape}.  Temporal motif enumeration algorithms provide
an additional algorithmic design challenge.  There is also a host of theoretical
questions in this area for lower bounds on temporal motif counting.  Finally,
motif counts can also be measured with respect to a null
model~\cite{kovanen2011temporal,milo2002network}.  Such analysis may yield
additional discoveries.  Importantly, our algorithms will speed up such
computations, which use raw counts from many random instances of a generative
null model.

\xhdr{Acknowledgements}
We thank Moses Charikar for valuable discussion.
This research has been supported in part by NSF
IIS-1149837, ARO MURI,
DARPA SIMPLEX and NGS2,
Boeing,
Bosch,
Huawei,
Lightspeed,
SAP,
Tencent,
Volkswagen,
Stanford Data Science Initiative,
and a Stanford Graduate Fellowship.

\vspace{-0.25cm}
\bibliographystyle{abbrv}
\bibliography{refs}


\end{document}